\definecolor{hyper}{RGB}{0,0,88}
\renewcommand{\backref}[1]{}
\renewcommand{\backrefalt}[4]{%
\ifcase #1 %
\or
[p.\ #2]%
\else
[pp.\ #2]%
\fi}
\newtheorem{theorem}{Theorem}
\newtheorem{lemma}{Lemma}
\newtheorem{claim}{Claim}
\newtheorem{fact}{Fact}
\newtheorem{observation}{Observation}
\newtheorem{open}{Open Question}
\newcommand{\customthmname}{}
\newtheorem*{customthm*}{\customthmname}
\newenvironment{mylist}[1]{\begin{itemize}\setlength{\itemsep}{#1pt}\setlength{\parsep}{0pt}\setlength{\parskip}{0pt}}{\end{itemize}}
\newcommand{\Id}{\textsc{Id}}
\newcommand{\Or}{\textsc{Or}}
\newcommand{\GapOr}{\textsc{GapOr}}
\newcommand{\Xor}{\textsc{Xor}}
\newcommand{\Maj}{\textsc{Maj}}
\newcommand{\GapMaj}{\textsc{GapMaj}}
\newcommand{\Which}{\textsc{Which}}
\newcommand{\Omb}{\textsc{Omb}}
\newcommand{\dt}{{{\text{\upshape\sffamily dt}}\xspace}}
\renewcommand{\P}{{{\text{\upshape\sffamily P}}\xspace}}
\newcommand{\RP}{{{\text{\upshape\sffamily RP}}\xspace}}
\newcommand{\BPP}{{{\text{\upshape\sffamily BPP}}\xspace}}
\newcommand{\BPPbar}{{\overline{\text{\upshape\sffamily BPP}}\xspace}}
\newcommand{\tWAPP}{{{\text{\upshape\sffamily 2WAPP}}\xspace}}
\newcommand{\WAPP}{{{\text{\upshape\sffamily WAPP}}\xspace}}
\newcommand{\AWPP}{{{\text{\upshape\sffamily AWPP}}\xspace}}
\newcommand{\PostBPP}{{{\text{\upshape\sffamily PostBPP}}\xspace}}
\newcommand{\PP}{{{\text{\upshape\sffamily PP}}\xspace}}
\newcommand{\Rbar}{\overline{{\text{\upshape R}}}\xspace}
\renewcommand{\Pr}{\mathbb{P}}
\newcommand{\E}{\mathbb{E}}
\newcommand{\calD}{{\mathcal D}}
\newcommand{\calF}{{\mathcal F}}
\newcommand{\calG}{{\mathcal G}}
\renewcommand{\mid}{\,|\,}
\newcommand{\bigmid}{\,\big|\,}
\renewcommand{\i}{\textit{(i)}\xspace}
\newcommand{\ii}{\textit{(ii)}\xspace}
\begin{document}

\title{When Is Amplification Necessary for Composition\\in Randomized Query Complexity?}

\author{Shalev Ben-David\footnote{University of Waterloo. \texttt{shalev.b@uwaterloo.ca}}
\and Mika G\"o\"os\footnote{Stanford University. \texttt{goos@stanford.edu}}
\and Robin Kothari\footnote{Microsoft Quantum and Microsoft Research. \texttt{robin.kothari@microsoft.com}}
\and Thomas Watson\footnote{University of Memphis. \texttt{Thomas.Watson@memphis.edu}}
}

\maketitle

\begin{abstract}
Suppose we have randomized decision trees for an outer function $f$ and an inner function $g$. The natural approach for obtaining a randomized decision tree for the composed function $(f\circ g^n)(x^1,\ldots,x^n)=f(g(x^1),\ldots,g(x^n))$ involves amplifying the success probability of the decision tree for $g$, so that a union bound can be used to bound the error probability over all the coordinates. The amplification introduces a logarithmic factor cost overhead. We study the question: When is this log factor necessary? We show that when the outer function is parity or majority, the log factor can be necessary, even for models that are more powerful than plain randomized decision trees. Our results are related to, but qualitatively strengthen in various ways, known results about decision trees with noisy inputs.
\end{abstract}


\section{Introduction} \label{sec:intro}

A deterministic decision tree for computing a partial function $f\colon\{0,1\}^n\to Z$ is a binary tree where each internal node is labeled with an index from $[n]$ and each leaf is labeled with an output value from $Z$. On input $x\in\{0,1\}^n$, the computation follows a root-to-leaf path where at a node labeled with index $i$, the value of $x_i$ is queried and the path goes to the left child if $x_i=0$ and to the right child if $x_i=1$. The leaf reached on input $x$ must be labeled with the value $f(x)$ (if the latter is defined). The cost of the decision tree is its depth, i.e., the maximum number of queries it makes over all inputs. The deterministic query complexity of $f$ is the minimum cost of any deterministic decision tree that computes $f$. We will consider several more general models of decision trees (randomized, etc.), so we repurpose traditional complexity class notation to refer to the various associated query complexity measures. Since $\P$ is the traditional complexity class corresponding to deterministic computation, we let $\P(f)$ denote the deterministic query complexity of $f$. (Some of the recent literature uses the notation $\P^\dt(f)$, but this paper deals exclusively with decision trees, so we drop the $\dt$ superscript.)

A randomized decision tree is a probability distribution over deterministic decision trees. Computing $f$ with error $\varepsilon$ means that for every input $x$ (for which $f(x)$ is defined), the probability that the output is not $f(x)$ is at most $\varepsilon$. The cost of a randomized decision tree is the maximum depth of all the deterministic trees in its support. The randomized query complexity $\BPP_\varepsilon(f)$ is the minimum cost of any randomized decision tree that computes $f$ with error $\varepsilon$. When we write $\BPP(f)$ with no $\varepsilon$ specified, we mean $\varepsilon=1/3$. A basic fact about randomized computation is that the success probability can be amplified, with a multiplicative overhead in cost, by running several independent trials and taking the majority vote of the outputs: $\BPP_\varepsilon(f)\le O(\BPP(f)\cdot\log(1/\varepsilon))$. See \cite{buhrman02complexity} for a survey of classic results on query complexity.

If $f\colon\{0,1\}^n\to Z$ and $g\colon\{0,1\}^m\to\{0,1\}$ are two partial functions, their composition is $f\circ g^n\colon(\{0,1\}^m)^n\to Z$ where $(f\circ g^n)(x^1,\ldots,x^n)\coloneqq f(g(x^1),\ldots,g(x^n))$ (which is defined iff $g(x^i)$ is defined for all $i$ and $f(g(x^1),\ldots,g(x^n))$ is defined). How does the randomized query complexity of $f\circ g^n$ depend on the randomized query complexities of $f$ and $g$? A simple observation is that to design a randomized decision tree for $f\circ g^n$, we can take a $1/6$-error randomized decision tree for $f$ and replace each query---say to the $i^\text{th}$ input bit of $f$---with a $1/6n$-error randomized decision tree for evaluating $g(x^i)$. By a union bound, with probability at least $5/6$ all of the (at most $n$) evaluations of $g$ return the correct answer, and so with probability at least $2/3$ the final evaluation of $f$ is also correct. Since $\BPP_{1/6n}(g)\le O(\BPP_{1/n}(g))$, we can write this upper bound as
\begin{equation} \label{eq:simple-bound}
\BPP(f\circ g^n)~\le~O(\BPP(f)\cdot\BPP_{1/n}(g))~\le~O(\BPP(f)\cdot\BPP(g)\cdot\log n).
\end{equation}
\textit{When is this tight?} It will take some effort to suitably formulate this question. We begin by reviewing known related results.


\subsection{When is amplification necessary?} \label{sec:intro:formulate}

As for general lower bounds (that hold for all $f$ and $g$), much work has gone into proving lower bounds on $\BPP(f\circ g^n)$ in terms of complexity measures of $f$ and $g$ that are defined using models more powerful than plain randomized query complexity \cite{goos16composition,anshu17composition,bendavid18randomized,bassilakis20power,bendavid20tight}. In terms of just $\BPP(f)$ and $\BPP(g)$, the state-of-the-art is that $\BPP(f\circ g^n)\ge\Omega(\BPP(f)\cdot\sqrt{\BPP(g)})$ for all $f$ and $g$ \cite{gavinsky19composition}. Furthermore, it is known that the latter bound is sometimes tight: There exist partial boolean functions $f$ and $g$ such that $\BPP(f\circ g^n)\le\widetilde{O}(\BPP(f)\cdot\sqrt{\BPP(g)})$ and $\BPP(f),\BPP(g)\ge\omega(1)$ \cite{gavinsky19composition,bendavid20tight}. Thus \eqref{eq:simple-bound} is far from being \emph{always} tight, even without worrying about the need for amplification. However, it remains plausible that $\BPP(f\circ g^n)\ge\Omega(\BPP(f)\cdot\BPP(g))$ holds for all \emph{total} $f$ and all partial $g$. We take this as a working conjecture in this paper. This conjecture has been confirmed for some specific outer functions $f$, such as the identity function $\Id\colon\{0,1\}^n\to\{0,1\}^n$ \cite{jain10optimal} (this is called a ``direct sum'' result) and the boolean functions $\Or$, $\Xor$ (parity), and $\Maj$ (majority) \cite{goos18randomized}. These results, however, do not address the need for amplification in the upper bound \eqref{eq:simple-bound}. To formulate our question of whether \eqref{eq:simple-bound} is tight, a first draft could be: \[\text{\textbf{Question A}, with respect to a particular $f$:~~~Is \eqref{eq:simple-bound} tight for \emph{all} partial functions $g$?}\] This is not quite a fair question, for at least two reasons:
\begin{mylist}{2}
\item Regarding the first inequality in \eqref{eq:simple-bound}: The simple upper bound actually shows $\BPP(f\circ g^n)\le O(\BPP(f)\cdot\BPP_{1/\BPP(f)}(g))$ (the union bound is only over queries that take place, not over all possible queries). So for simplicity, let us restrict our attention to $f$ satisfying $\BPP(f)\ge\Omega(n)$, which is the case for $\Id$, $\Or$, $\Xor$, and $\Maj$.
\item Regarding the second inequality in \eqref{eq:simple-bound}: Some functions $g$ satisfy $\BPP_{1/n}(g)\le o(\BPP(g)\cdot\log n)$ (e.g., if $\P(g)\le O(\BPP(g))$). So for simplicity, let us restrict our attention to $g$ satisfying $\BPP_{1/n}(g)\ge\Omega(\BPP(g)\cdot\log n)$, which (as we show later) is the case for two partial functions $\GapOr$ and $\GapMaj$ defined as follows ($|x|$ denotes the Hamming weight of $x\in\{0,1\}^m$): \[\GapOr(x)\coloneqq\begin{cases}0&\text{if $|x|=0$}\\1&\text{if $|x|=m/2$}\end{cases}\quad\text{and}\quad\GapMaj(x)\coloneqq\begin{cases}0&\text{if $|x|=m/3$}\\1&\text{if $|x|=2m/3$}\end{cases}.\]
\end{mylist}
Thus, a better formulation of Question A would be: Assuming $\BPP(f)\ge\Omega(n)$, is \eqref{eq:simple-bound} tight for all partial $g$ satisfying $\BPP_{1/n}(g)\ge\Omega(\BPP(g)\cdot\log n)$? Even with these caveats, the answer is always ``no.'' It will be instructive to examine a counterexample. Let $\Which\colon\{0,1\}^2\to\{0,1\}$ be the partial function such that $\Which(y)$ indicates the location of the unique $1$ in $y$, under the promise that $|y|=1$. Then $g=\Which\circ\GapOr^2$ takes an input of length $2m$ with the promise that there are exactly $m/2$ many $1$s, either all in the left half or all in the right half, and outputs which half has the $1$s. It turns out $\BPP(g)\le O(1)$ and $\BPP_{1/n}(g)\ge\Omega(\log n)$ provided $m\ge\log n$ (for similar reasons as $\GapOr$ itself) and yet $\BPP(f\circ g^n)\le O(\BPP(f))$ for all $f$: To compute $f\circ g^n$, we can run an optimal randomized decision tree for $f$ and whenever it queries $g(x^i)$, we repeatedly query uniformly random bit positions of $x^i$ until we find a $1$ (so the value of $g(x^i)$ is determined by which half we found a $1$ in). This has the same error probability as the randomized decision tree for $f$, and the total number of queries to the bits of $(x^1,\ldots,x^n)$ is $O(\BPP(f))$ in expectation, because for each $i$ it takes $O(1)$ queries in expectation to locate a $1$ in $x^i$. By Markov's inequality, with high constant probability this halts after only $O(\BPP(f))$ total queries. Thus by aborting the computation if it attempts to make too many queries, we obtain a randomized decision tree for $f\circ g^n$ that always makes $O(\BPP(f))$ queries, with only a small hit in the error probability.

Blais and Brody \cite{blais19optimal} adjust the statement of Question A so the answer becomes ``yes'' in the case $f=\Id$. Specifically, they weaken the right-hand side in such a way that the above counterexample is ruled out. Defining\footnote{\cite{blais19optimal} used the notation $\Rbar$ instead of $\BPPbar$.} $\BPPbar_\varepsilon(g)$ similarly to $\BPP_\varepsilon(g)$ but where the cost of a randomized decision tree is the maximum over all inputs (on which $g$ is defined) of the expected number of queries, we now have $\BPPbar_{1/n}(g)\le\BPPbar_0(g)\le O(1)$ for the $g$ from the counterexample. The theorem from \cite{blais19optimal} is $\BPP(f\circ g^n)\ge\Omega(\BPP(f)\cdot\BPPbar_{1/n}(g))$ when $f=\Id$, in other words, $\BPP(g^n)=\Omega(n\cdot\BPPbar_{1/n}(g))$ (a ``strong direct sum'' result). \cite{blais19optimal} also explicitly asked whether similar results hold for other functions $f$. The corresponding conjecture for $f=\Or$ is false (as we note below) while for $f=\Xor$ and $f=\Maj$ it remains open.

To make progress, we step back and ask a seemingly more innocuous version of the question: \[\text{\textbf{Question B}, with respect to a particular $f$:~~~Is \eqref{eq:simple-bound} tight for \emph{some} partial function $g$?}\] It turns out the answer is ``no'' for $f=\Or$ and is ``yes'' for both $f=\Xor$ and $f=\Maj$.


\subsection{Decision trees with noisy inputs} \label{sec:intro:or}

Question B is related to ``query complexity with noisy inputs'' (introduced in \cite{feige94computing}), so let us review the latter model: When input bit $y_i$ is queried, the wrong bit value is returned to the decision tree with some probability $\le 1/3$ (and the correct value of $y_i$ is returned with the remaining probability). The ``noise events'' are independent across all queries, including multiple queries to the same input bit. Now the adversary gets to pick not only the input, but also the ``noise probabilities.'' \cite{feige94computing} distinguishes between two extreme possibilites: A static adversary has a single common noise probability for all queries, while a dynamic adversary can choose a different noise probability for each node in the decision tree. In this paper we make a reasonable compromise: The adversary gets to choose a tuple of noise probabilities $(\nu_1,\ldots,\nu_n)$, and each query to $y_i$ returns $1-y_i$ with probability exactly $\nu_i$. When a randomized decision tree computes $f$ with error probability $\varepsilon$, that means for every input $y\in\{0,1\}^n$ and every noise probability tuple $(\nu_1,\ldots,\nu_n)$ (with $\nu_i\le 1/3$ for each $i$), the output is $f(y)$ with probability $\ge 1-\varepsilon$ over the random noise and randomness of the decision tree. We invent the notation $\BPP^*(f)$ for the minimum cost of any randomized decision tree that computes $f$ on noisy inputs, with error probability $1/3$. We have $\BPP^*(f)\le O(\BPP(f)\cdot\log n)\le O(n\log n)$ by repeating each query $O(\log n)$ times and taking the majority vote (to drive the noise probabilities down to $o(1/n)$), and using a union bound to absorb the noise probabilities into the error probability. The connection with composition is that $\BPP(f\circ g^n)\le\BPP^*(f)\cdot\BPP(g)$, because to design a randomized decision tree for $f\circ g^n$, we can take a $1/3$-error randomized decision tree for $f$ with noisy inputs, and replace each query---say to $y_i$---with a $1/3$-error randomized decision tree for evaluating $g(x^i)$.

There is a similar connection for $1$-sided error and $1$-sided noise. When a randomized decision tree has $1$-sided error $\varepsilon$, that means on $0$-inputs the output is wrong with probability $0$, and on $1$-inputs the output is wrong with probability at most $\varepsilon$. We let $\RP(g)$ denote the minimum cost of any randomized decision tree that computes $g$ with $1$-sided error $1/2$. Similarly, $1$-sided noise means that when input bit $y_i$ is queried, if the actual value is $y_i=0$ then $1$ is returned with probability $0$, and if the actual value is $y_i=1$ then $0$ is returned with probability $\nu_i\le 1/2$. We invent the notation $\BPP^\dagger(f)$ for the minimum cost of any randomized decision tree that computes $f$ on $1$-sided noisy inputs, with $2$-sided error probability $1/3$. We have $\BPP(f)\le\BPP^\dagger(f)\le\BPP^*(f)$. The connection $\BPP(f\circ g^n)\le\BPP^\dagger(f)\cdot\RP(g)$ holds like in the $2$-sided noise setting. We officially record these observations:

\begin{observation} \label{obs:noisy-comp}
For all $f$ and $g$, \[\BPP(f\circ g^n)~\le~\BPP^*(f)\cdot\BPP(g)\quad\text{and}\quad\BPP(f\circ g^n)~\le~\BPP^\dagger(f)\cdot\RP(g).\]
\end{observation}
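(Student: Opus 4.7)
The plan is to follow the sketch in the paragraph preceding the observation: for each bound, simulate a noisy-input decision tree for $f$ by replacing each of its queries with a fresh, independent run of a randomized decision tree for $g$. For the first inequality, I would fix a depth-$\BPP^*(f)$ tree $T_f$ computing $f$ on $2$-sided noisy inputs with error $1/3$, together with a depth-$\BPP(g)$ tree $T_g$ computing $g$ with error $1/3$. On input $(x^1,\ldots,x^n)$, simulate $T_f$; whenever $T_f$ queries coordinate $i$ of its length-$n$ input, evaluate an independent copy of $T_g$ on $x^i$ and feed the resulting bit back to $T_f$ as the answer. The depth of the composed tree is at most $\BPP^*(f)\cdot\BPP(g)$.

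For correctness, set $y\coloneqq(g(x^1),\ldots,g(x^n))$. For each coordinate $i$, every invocation of $T_g$ on $x^i$ returns $1-y_i$ with some probability $\nu_i\le 1/3$ and returns $y_i$ otherwise. Because fresh randomness is used for each invocation, these ``noise events'' are mutually independent across all queries, and the probability $\nu_i$ depends only on $x^i$, not on the history inside $T_f$. Hence the sequence of answers fed to $T_f$ has exactly the distribution of noisy queries to $y$ under the fixed noise tuple $(\nu_1,\ldots,\nu_n)\in[0,1/3]^n$, so the guarantee of $T_f$ yields output $f(y)=(f\circ g^n)(x^1,\ldots,x^n)$ with probability at least $2/3$.

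The second inequality is proved by the same simulation, with $T_f$ replaced by an optimal $\BPP^\dagger$-tree and $T_g$ replaced by an optimal $\RP$-tree for $g$. The $\RP$-tree outputs $0$ with probability $1$ when $g(x^i)=0$ and outputs $0$ with probability at most $1/2$ when $g(x^i)=1$, so the simulated answers match the distribution of $1$-sided noisy queries to $y$ with per-coordinate noise probability bounded by $1/2$, which is exactly the model against which $\BPP^\dagger(f)$ is defined.

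The only bookkeeping point that needs care is that every query of $T_f$, including repeated queries to the same coordinate $i$ (allowed in the noisy-input model), must be answered by an \emph{independent} fresh run of $T_g$ on $x^i$ rather than by caching a previous answer; this is what makes the noise events mutually independent as the definition of $\BPP^*$ and $\BPP^\dagger$ demands. There is no substantive obstacle beyond this check.
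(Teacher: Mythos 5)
Your proof is correct and follows exactly the simulation argument the paper sketches in the paragraph preceding the observation: replace each (possibly repeated) query of the noisy-input tree for $f$ by a fresh independent run of the tree for $g$, note that the induced per-coordinate error probability $\nu_i=\Pr[T_g(x^i)\ne g(x^i)]$ is a fixed quantity bounded by $1/3$ (resp.\ $1/2$ and one-sided), and invoke the guarantee of $\BPP^*(f)$ (resp.\ $\BPP^\dagger(f)$) against that noise tuple. The point you flag at the end---that repeated queries to coordinate $i$ must be answered by independent fresh runs, not a cached value---is indeed the only subtlety, and you handle it correctly.
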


The upshot is that noisy upper bounds imply composition upper bounds, and composition lower bounds imply noisy lower bounds. There are many proofs of the result $\BPP^*(\Or)\le O(n)$ \cite{feige94computing,kenyon94boolean,newman09computing,goyal10rounds}:

\begin{theorem}[$\Or$ never necessitates amplification] \label{thm:or}
$\BPP^*(\Or)\le O(n)$ and thus for every partial function $g$, \[\BPP(\Or\circ g^n)~\le~O(n\cdot\BPP(g)).\]
\end{theorem}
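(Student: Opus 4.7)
The second inequality is immediate from \autoref{obs:noisy-comp}, since it gives $\BPP(\Or\circ g^n)\le\BPP^*(\Or)\cdot\BPP(g)$, so the only real task is to exhibit a randomized decision tree of depth $O(n)$ that, on input $y\in\{0,1\}^n$ with adversarially chosen per-coordinate noise rates $\nu_i\le 1/3$, computes $\Or(y)$ correctly with probability $\ge 2/3$.

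My plan is a per-coordinate sequential probability ratio test enforced under a global query budget. Fix a large constant $B$, and set $T\coloneqq C_1\log n$ for a suitable constant $C_1$. I will process coordinates $i=1,2,\ldots,n$ in order; for each $i$ I maintain a counter $c_i$ starting at $0$, and repeatedly query $y_i$, incrementing $c_i$ when the returned bit is $1$ and decrementing when it is $0$, until $c_i$ hits either $+T$ (in which case I halt and output $1$) or $-B$ (in which case I abandon coordinate $i$ and advance to $i+1$). If every coordinate is abandoned I output $0$. Additionally I enforce a hard cap: the moment the total query count exceeds $C_2 n$ for a suitable constant $C_2$, I halt and output $0$. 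By construction the worst-case depth is $C_2 n=O(n)$.

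For correctness, $c_i$ performs a biased $\pm1$ random walk with up-probability $\nu_i$ (drift $\le-1/3$) if $y_i=0$, and up-probability $1-\nu_i$ (drift $\ge 1/3$) if $y_i=1$. Gambler's ruin gives the two key estimates: \i if $y_i=0$, the probability $c_i$ ever reaches $+T$ is at most $(\nu_i/(1-\nu_i))^T\le 2^{-T}$, so a union bound yields total false-positive probability $\le n\cdot 2^{-T}\le 1/10$ once $C_1$ is large; \ii if $y_i=1$, the probability the walk hits $-B$ before $+T$ is at most $(\nu_i/(1-\nu_i))^B\le 2^{-B}\le 1/10$ for $B$ large. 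Wald's identity then gives expected stopping time $O(B)=O(1)$ per $0$-bit and $O(T)=O(\log n)$ for the (at most one relevant) $1$-bit, so the expected total query count is $O(n)$. Using the exponential tails of the per-bit stopping times, a Bernstein-type concentration argument ensures the budget $C_2 n$ is exceeded with probability $\le 1/10$ for $C_2$ large. Summing the three error sources gives total error $\le 3/10<1/3$.

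The hard part will be balancing the three parameters $B$, $T$, $C_2$ simultaneously: $T$ must be $\Omega(\log n)$ to beat the $n$-fold union bound on false positives, which unavoidably forces $\Theta(\log n)$ expected queries on the critical $1$-coordinate. This is only affordable because $\Or$ has just one such coordinate to confirm; the most delicate step is converting the expected-$O(n)$ bound into a worst-case $O(n)$ guarantee, which relies on concentration of a sum of independent hitting times of biased random walks.
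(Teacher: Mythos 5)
Your proof is correct, but it uses a genuinely different mechanism from the paper's. You implement a per-coordinate sequential test with \emph{two} asymmetric thresholds: a constant low threshold $-B$ to abandon the coordinate, and a logarithmic high threshold $+T=\Theta(\log n)$ to declare ``output $1$''; the $C_2n$ budget, when exceeded, triggers output $0$. The paper instead uses only the low threshold $-1$ per coordinate and never maintains a high threshold at all: it declares ``output $1$'' when the global budget of $6n$ queries is exhausted, and ``output $0$'' only if every coordinate is abandoned first. The paper's trick is that on a $1$-input, the walk at a $1$-coordinate will (with constant probability) \emph{never} hit $-1$, so the budget exhaustion itself serves as the confirmation signal; this eliminates the $\log n$-sized threshold entirely. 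Consequently the paper's correctness argument only needs the expected-time-plus-Markov step for $0$-inputs (to ensure all coordinates get abandoned before the deadline), whereas you need it for $1$-inputs (to ensure the budget is not prematurely exceeded before some walk reaches $+T$), and you additionally need the $n\cdot 2^{-T}$ union bound on false positives. Your version works because $\Or$ has (effectively) only one coordinate that needs the expensive $\Theta(\log n)$-query confirmation, so the amortized cost is still $O(n)$; but it requires juggling three parameters $B$, $T$, $C_2$ where the paper needs only the single constant $6$. A minor overclaim: your three error events do not all apply to the same input ($0$-inputs see only the false-positive event, $1$-inputs see only the abandon-all and budget events), so the final error is $\le 2/10$, not $3/10$; and Markov's inequality already suffices for the budget bound---no Bernstein-type concentration is actually needed.
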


\autoref{thm:or} is not new, but in \autoref{sec:or} we provide a particularly clean and elementary proof (related to, but more streamlined than, the proof in \cite{kenyon94boolean}). We mention that the proof straightforwardly generalizes to some other functions $f$, such as ``odd-max-bit'': $\Omb(y)=1$ iff the highest index of any $1$ in $y$ is odd.

We turn our attention to lower bounds. Various special-purpose techniques have been developed for proving query complexity lower bounds in the noisy setting \cite{feige94computing,evans98average,dutta08lower,goyal10rounds}. However, a conceptual consequence of \autoref{obs:noisy-comp} is that special-purpose techniques are not generally necessary: We can just use techniques for lower bounding plain (non-noisy) randomized query complexity, applied to composed functions.


\subsection{Lower bound for parity} \label{sec:intro:xor}

\cite{feige94computing} proved that $\BPP^*(\Xor)$ and $\BPP^*(\Maj)$ are $\Omega(n\log n)$. Although apparently not recorded in the literature, it is possible to generalize this result to show $\BPP^\dagger(\Xor)$ and $\BPP^\dagger(\Maj)$ are $\Omega(n\log n)$. However, we prove results even stronger than that, using the composition paradigm. Our results involve query complexity models that are more powerful than $\BPP$, and even more powerful than the $\BPPbar$ model from \cite{blais19optimal}. This follows a theme from a lot of prior work: Since $\BPP$ query complexity is rather subtle, we can make progress by studying related models that are somewhat more ``well-behaved.''

\begin{mylist}{2}
\item As observed in \cite{blais19optimal}, the $\BPPbar$ model is equivalent to one where the cost is the worst-case (rather than expected) number of queries, and a randomized decision tree is allowed to abort (i.e., output a special symbol $\bot$) with at most a small constant probability, and the output should be correct with high probability conditioned on not aborting.
\item If we strengthen the above model by allowing the non-abort probability to be arbitrarily close to $0$ (rather than close to $1$), but require that the non-abort probabilities are approximately the same for all inputs (within some factor close to $1$), the resulting model has been called $\tWAPP$ (``$2$-sided weak almost-wide $\PP$'') \cite{goos16rectangles,goos18randomized}. The ``$1$-sided'' version $\WAPP$, defined later, will be relevant to us.
\item If we further strengthen the model by allowing the non-abort probabilities to be completely unrelated for different inputs (and still arbitrarily close to $0$), the resulting model has been called $\PostBPP$ (``$\BPP$ with post-selection'') \cite{goos16rectangles,cade18post}.
\end{mylist}

We first consider the last of these models. $\PostBPP_\varepsilon(f)$ is the minimum cost of any randomized decision tree such that on every input $x$ (for which $f(x)$ is defined), the probability of outputting $\bot$ is $<1$, and the probability of outputting $f(x)$ is $\ge 1-\varepsilon$ conditioned on not outputting $\bot$. Trivially, $\PostBPP(f)\le\BPP(f)$. In fact, the $\PostBPP$ model is much more powerful than plain randomized query complexity; for example (noted in \cite{goos16rectangles}) it can efficiently compute the aforementioned odd-max-bit function: $\PostBPP(\Omb)\le 1$.

For the noisy input setting, $\PostBPP^*$ and $\PostBPP^\dagger$ are defined in the natural way, and $\PostBPP(f\circ g^n)\le\PostBPP^*(f)\cdot\BPP(g)$ and $\PostBPP(f\circ g^n)\le\PostBPP^\dagger(f)\cdot\RP(g)$ hold like in \autoref{obs:noisy-comp}.

In \autoref{sec:xor} we prove something qualitatively much stronger than $\BPP^*(\Xor)\ge\Omega(n\log n)$:

\begin{theorem}[$\Xor$ sometimes necessitates amplification] \label{thm:xor}
For some partial function $g$,\\ namely $g=\GapMaj$ with $m\ge\log n$, \[\PostBPP(\Xor\circ g^n)~\ge~\Omega(n\cdot\BPP_{1/n}(g))~\ge~\Omega(n\log n\cdot\BPP(g)).\] In particular, $\PostBPP^*(\Xor)\ge\Omega(n\log n)$.
\end{theorem}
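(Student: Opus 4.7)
The plan is to show $\PostBPP(\Xor\circ\GapMaj^n)\ge\Omega(n\log n)$ via a posterior-bias analysis of a derandomized decision tree on a hard input distribution. This suffices for all parts of the theorem: $\BPP_{1/n}(\GapMaj)=\Theta(\log n)$ when $m\ge\log n$ (standard Chernoff amplification plus the likelihood-ratio lower bound for distinguishing $\mathrm{Ber}(1/3)$ vs $\mathrm{Ber}(2/3)$) and $\BPP(\GapMaj)=O(1)$, so $\Omega(n\log n)$ equals both $\Omega(n\cdot\BPP_{1/n}(g))$ and $\Omega(n\log n\cdot\BPP(g))$; and $\PostBPP^*(\Xor)\ge\Omega(n\log n)$ drops out of the $\PostBPP$ analogue of \autoref{obs:noisy-comp} stated just before the theorem.

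So fix a $\PostBPP$ algorithm $A$ for $\Xor\circ\GapMaj^n$ of worst-case cost $T$, and take the hard distribution $\mu$ in which independently for each $j$ one draws $z_j\in\{0,1\}$ uniformly and $x^j$ uniformly from $\GapMaj^{-1}(z_j)$. Averaging the pointwise guarantee $\Pr_R[\text{correct}\mid\text{accept},x]\ge 2/3$ over $x\sim\mu$ yields $\E_R[c_R-\tfrac23 a_R]\ge 0$, where $a_R,c_R$ are the $\mu$-probabilities of acceptance and of correct-and-acceptance under randomness $R$, and pigeonhole then produces a deterministic decision tree $\calT$ of depth $\le T$ with leaves labeled in $\{0,1,\bot\}$ whose $\mu$-average conditional correctness is still $\ge 2/3$. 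Analyzing the leaves of $\calT$: because the $x^j$'s are independent under $\mu$, conditioning on reaching a leaf $\ell$ factorizes the posterior on $z$ coordinatewise, and letting $c_j^\ell,s_j^\ell$ denote the number of queries made to $x^j$ along the path to $\ell$ and the number returning $1$, a direct hypergeometric-likelihood calculation gives $|1-2p_j^\ell|\le 1-2^{-O(c_j^\ell)}$ whenever $c_j^\ell\le m/3$. With optimally chosen leaf labels the conditional correctness at $\ell$ equals $\tfrac12+\tfrac12\prod_j|1-2p_j^\ell|$, so the $\mu$-averaged correctness bound translates to $\E_\ell[\prod_j|1-2p_j^\ell|\mid\text{accept}]\ge 1/3$, and a reverse Markov argument furnishes an accepting leaf $\ell^*$ with $\prod_j|1-2p_j^{\ell^*}|\ge 1/6$.

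The final step is a Kraft-style inequality at $\ell^*$. Taking logarithms and using $-\log(1-u)\ge u$ gives $\sum_j 2^{-O(c_j^{\ell^*})}\le O(1)$ summed over coordinates with $c_j^{\ell^*}\le m/3$; the remaining coordinates have $c_j^{\ell^*}>m/3\ge\tfrac13\log n$ and contribute $\Omega(\log n)$ queries each. If $\Omega(n)$ coordinates lie in the first group, convexity applied to the Kraft-like constraint forces their $\sum c_j^{\ell^*}\ge\Omega(n\log n)$; otherwise $\Omega(n)$ coordinates lie in the second group and contribute $\Omega(n\log n)$ queries directly. Either way, $T\ge\sum_j c_j^{\ell^*}\ge\Omega(n\log n)$, as required.

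I expect the main obstacle to be a clean uniform derivation of the posterior-bias inequality $|1-2p_j^\ell|\le 1-2^{-O(c_j^\ell)}$ across regimes of $c_j^\ell$ relative to $m$: the i.i.d.\ binomial calculation gives it immediately, but the hypergeometric likelihood arising from the $\GapMaj$ promise requires a careful case split (a small-$c_j^\ell$ regime where binomial and hypergeometric likelihoods agree up to a bounded multiplicative correction, and a large-$c_j^\ell$ regime where the per-coordinate query count is already $\Omega(\log n)$ so the conclusion holds trivially).
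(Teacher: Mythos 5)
Your proposal is correct and follows essentially the same approach as the paper's proof: derandomize over the hard product distribution $\calD=\calG^n$ to obtain a single good leaf (the paper packages this step as \autoref{fact:postbpp}), bound the per-coordinate posterior bias via the hypergeometric likelihood ratio (the paper's \autoref{fact:gapmaj}; your $|1-2p_j^\ell|$ is exactly the paper's $|a_i|$), and combine the biases using the XOR/Fourier product identity. The only substantive deviation is in the last arithmetic step, where you use a Kraft/Jensen convexity argument to turn $\prod_j|1-2p_j^\ell|\ge\Omega(1)$ into $\sum_j c_j^{\ell^*}\ge\Omega(n\log n)$, whereas the paper more crudely notes that at least $n/2$ coordinates have path-width $\le\frac{1}{7}\log n$ and hence $\prod_i|a_i|\le(1-n^{-1/4})^{n/2}\le 1/4$ directly; both work, and the hypergeometric case-split you flag as the main obstacle is exactly what \autoref{fact:gapmaj} carries out.
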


Let us compare \autoref{thm:xor} to two previous results.
\begin{mylist}{2}
\item \cite{evans98average} proved that $\BPPbar^*(\Xor)\ge\Omega(n\log n)$ and that this lower bound holds even in the average-case setting (i.e., $\Omega(n\log n)$ queries are needed in expectation to succeed with high probability over a uniformly random input, random noise, and randomness of the decision tree). Our proof of \autoref{thm:xor} is simpler than the proof in \cite{evans98average} (though both proofs have a Fourier flavor), it also works in the average-case setting, and it yields a stronger result since the model is $\PostBPP$ instead of just $\BPPbar$ (and the lower bound holds for composition rather than just noisy inputs). \cite{dutta08lower} presented a different simplified proof of the result from \cite{evans98average}, but that proof does not generalize to $\PostBPP^*$.
\item Our proof of \autoref{thm:xor} shows something analogous, but incomparable, to the strong direct sum from \cite{blais19optimal}. As we explain in \autoref{sec:xor}, our proof shows that $\PostBPP(\Xor\circ g^n)\ge\Omega(n\cdot\PostBPP_{1/n}(g))$ holds for \emph{all} $g$ (thus addressing a version of our Question A). Compared to the \cite{blais19optimal} result that $\BPPbar(\Id\circ g^n)\ge\Omega(n\cdot\BPPbar_{1/n}(g))$ for all $g$, our result has the advantages of working for $f=\Xor$ rather than $f=\Id$ and yielding a qualitatively stronger lower bound ($\PostBPP$ rather than $\BPPbar$ on the left side), but the disadvantage of also requiring the qualitatively stronger type of lower bound on $g$. Our result shows that if amplifying $g$ requires a log factor in a very strong sense (even $\PostBPP$-type decision trees cannot avoid the log factor), then that log factor will be necessary when composing $\Xor$ with $g$.
\end{mylist}


\subsection{Lower bound for majority} \label{sec:intro:maj}

Our main result strengthens the bound $\BPP^*(\Maj)\ge\Omega(n\log n)$ from \cite{feige94computing}, mainly by holding for the stronger model $\WAPP$ (rather than just $\BPP$), but also by directly handling $1$-sided noise and by holding for composition rather than just noisy inputs.

$\WAPP_\varepsilon(f)$ is the minimum cost of any randomized decision tree such that for some $t>0$, on input $x$ the probability of outputting $1$ is in the range $[(1-\varepsilon)t,t]$ if $f(x)=1$, and in the range $[0,\varepsilon t]$ if $f(x)=0$. The $\varepsilon$ subscript should always be specified, because unlike $\BPP$ and $\PostBPP$, $\WAPP$ is not amenable to efficient amplification of the error parameter $\varepsilon$ \cite{goos16rectangles}. For every constant $0<\varepsilon<1/2$, we have $\PostBPP(f)\le O(\WAPP_\varepsilon(f))\le O(\BPP(f))$.

$\WAPP$-type query complexity has several aliases, such as ``approximate conical junta degree'' and ``approximate query complexity in expectation,'' and it has recently played a central role in various randomized query (and communication) complexity lower bounds \cite{kaniewski15query,goos16rectangles,goos16composition,goos18randomized}. One can think of $\WAPP$ as a nonnegative version of approximate polynomial degree (which corresponds to the class $\AWPP$); in other words, it is a classical analogue of the polynomial method used to lower bound quantum algorithms.

For the noisy input setting, $\WAPP^*$ and $\WAPP^\dagger$ are defined in the natural way, and $\WAPP_\varepsilon(f\circ g^n)\le\WAPP_\varepsilon^*(f)\cdot\BPP(g)$ and $\WAPP_\varepsilon(f\circ g^n)\le\WAPP_\varepsilon^\dagger(f)\cdot\RP(g)$ hold like in \autoref{obs:noisy-comp}. We prove the following theorem, which shows that $\WAPP$ sometimes requires amplification, even in the one-sided noise setting.

\begin{theorem}[$\Maj$ sometimes necessitates amplification] \label{thm:maj}
For some partial function $g$,\\ namely $g=\GapOr$ with $m\ge\log n$, and some constant $\varepsilon>0$, \[\WAPP_\varepsilon(\Maj\circ g^n)~\ge~\Omega(n\cdot\BPP_{1/n}(g))~\ge~\Omega(n\log n\cdot\RP(g)).\] In particular, $\WAPP_\varepsilon^\dagger(\Maj)\ge\Omega(n\log n)$.
\end{theorem}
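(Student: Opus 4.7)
The plan is to prove the main inequality $\WAPP_\varepsilon(\Maj \circ \GapOr^n) \geq \Omega(n \cdot \BPP_{1/n}(\GapOr))$; the other two inequalities and the ``in particular'' statement follow from standard facts about $\GapOr$ together with \autoref{obs:noisy-comp}. Specifically, $\RP(\GapOr) \leq O(1)$ by querying a single uniformly random bit (zero false positives, error exactly $1/2$ on 1-inputs), and $\BPP_{1/n}(\GapOr) \geq \Omega(\log n)$ by a likelihood-ratio argument (each query on a 1-input returns $1$ with probability exactly $1/2$, so the likelihood ratio between the two hypotheses after $q$ queries is at most $2^q$, and achieving error $1/n$ requires this ratio to exceed $\Omega(n)$). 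The ``in particular'' statement then follows from $\WAPP_\varepsilon(\Maj \circ \GapOr^n) \leq \WAPP_\varepsilon^\dagger(\Maj) \cdot \RP(\GapOr)$.

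For the main inequality, I use the LP characterization of $\WAPP_\varepsilon$: its value equals the minimum width $T$ of a nonnegative combination $\phi = \sum_S c_S \cdot [S]$ (with each $[S]$ a conjunction of width at most $T$) satisfying, for some scale $\tau > 0$, $\phi(x) \in [(1-\varepsilon)\tau,\tau]$ on 1-inputs and $\phi(x) \in [0,\varepsilon\tau]$ on 0-inputs. Let such a $\phi$ witness $T = \WAPP_\varepsilon(\Maj \circ \GapOr^n)$. The first substep is to pass from $\phi$ to a structured polynomial on the outer bits by averaging over the inner randomness: define $\Phi \colon \{0,1\}^n \to \mathbb{R}_{\geq 0}$ by $\Phi(y) \coloneqq \mathbb{E}[\phi(x) \mid g(x^i) = y_i \text{ for all } i]$, where each $x^i$ is drawn uniformly conditional on its Hamming weight being $0$ or $m/2$ per $y_i$. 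A direct factorization across coordinates shows that $\Phi$ is a polynomial in $y$ of degree at most $T$, realized as a nonnegative combination of terms of the form $\beta_S \prod_{i \in A_S} y_i \prod_{i \in B_S} (1-\gamma_i^S y_i)$, with $A_S, B_S \subseteq [n]$ disjoint, $\gamma_i^S \in [0,1]$, and crucially $\beta_S \leq 2^{-k_S}$ where $k_S$ is the total number of positive literals in $S$ (since a uniform weight-$m/2$ string realizes any specified $k_S$-bit positive pattern with probability at most $2^{-k_S}$). This $\Phi$ must $\WAPP_\varepsilon$-approximate $\Maj$ as a function of $y$.

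The second substep is a dual-witness argument for this structured polynomial approximation problem. I would construct a signed measure $\psi$ on $\{0,1\}^n$, essentially supported on the slices $|y| = n/2 \pm O(1)$, such that $\sum_y \psi(y) \cdot (\text{term}) \leq 0$ for every structured term of the above form whose total width is below $c \cdot n \log n$, while $\sum_y \psi(y) \cdot \mathbf{1}[\Maj(y)=1] \cdot (1-\varepsilon) - \sum_y \psi(y) \cdot \mathbf{1}[\Maj(y)=0] \cdot \varepsilon > 0$. The multiplicative weight attenuation $\beta_S \leq 2^{-k_S}$ is the source of the extra $\log n$ factor: each ``positive-literal coordinate'' must pay $\Omega(\log n)$ in literal count before it can overcome the attenuation and contribute usefully to the WAPP gap, and the outer $\Maj$ forces these useful positive-literal coordinates to number $\Omega(n)$.

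The main obstacle will be the precise construction of the dual witness. A symmetric dual based on the product distribution $y \sim \mathrm{Bernoulli}(1/2 \pm c/\sqrt{n})^n$ yields only $\Omega(\sqrt n)$ via a Paturi-style calculation, and a naive tensorization with the per-coordinate amplification cost gives $\Omega(\sqrt n \cdot \log n)$, falling short by a factor of $\sqrt n$. To recover the full $\Omega(n \log n)$, I expect to need a hybrid argument on coordinate-revealed partial information: tilt the two signed-measure components asymmetrically, so that the $2^{-k_S}$ attenuation is traded against a biased moment-matching against Krawtchouk-basis symmetric functions, allowing the ``effective distinguishing power per coordinate'' to saturate at $\Theta(1/\log n)$ rather than $\Theta(1/\sqrt n)$, and then charge $\Omega(\log n)$ queries to each of the $\Omega(n)$ threshold-crossing coordinates.
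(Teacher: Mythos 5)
Your first substep---averaging the conical junta over the inner distributions to obtain a structured nonnegative combination on the outer $n$ bits, with the $2^{-k_S}$ attenuation coming from positive literals---is sound and is implicitly present in the paper's proof, where a conjunction $C = \prod_i C_i$ is evaluated against mixtures $\calD_\zeta = \E_{y\sim\calF_\zeta}[\calG_y]$ and the coordinates are split into $A$ (where $C_i(\calG_0)=1$) and $B$ (where $C_i(\calG_0)=0$, forcing $y_i=1$).

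However, the second substep is where the proof actually lives, and you have not supplied it: you explicitly flag ``the main obstacle will be the precise construction of the dual witness,'' observe that the natural symmetric duals fall a $\sqrt{n}$ factor short, and then gesture at an unspecified ``hybrid argument'' and a ``biased moment-matching against Krawtchouk-basis symmetric functions.'' That is the entirety of the difficulty, and it remains open in your write-up. In particular, the structured terms carry per-coordinate parameters $\gamma_i^S$ that differ from term to term, so a single signed measure $\psi$ must simultaneously defeat an uncountable family of shapes; it is not clear a clean explicit $\psi$ of the sort you describe exists, and nothing in the proposal suggests how the asymmetric tilt would actually trade against the attenuation to recover the missing $\sqrt n$.

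The paper sidesteps dual construction entirely. After fixing an arbitrary low-width conjunction $C=\prod_i C_i$, it places the three distributions required by its \autoref{fact:wapp} on the three specific Hamming-weight slices $|y|=n/2-1$, $n/2$, $n/2+1$, and writes $C(\calD_\zeta)=p_\zeta\cdot c_B\cdot q_\zeta$ where $p_\zeta=\Pr[Y\supseteq B]$ over the slice and $q_\zeta=\E_{S\sim\binom{A}{n/2-1+\zeta-|B|}}[c_S]$. The heart of the argument is then a log-convexity (Cauchy--Schwarz) inequality (\autoref{clm:growth}), applied to an idealized ``with-replacement'' version of the $q_2$ experiment, showing $q_2^*/q_1\ge q_1/q_0$, combined with elementary ratio bounds on $p_\zeta$ from hypergeometric counting. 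The per-coordinate $\log n$ enters because \autoref{fact:gapor}.\ii gives $c_i\ge 3^{-w_i}$ only once $w_i$ is as large as $\Theta(\log n)$, and a Markov/averaging step (\autoref{clm:useful}.\ii) converts this into $q_1\le\sqrt n\cdot q_2$, which controls the replacement error. So the structure you hypothesized for the extra $\log n$ is morally right, but the mechanism is a direct primal contradiction at a fixed conjunction, not a global dual witness. As it stands, your proposal does not constitute a proof: it reduces the theorem to a dual-construction problem that you have not solved and whose solvability in the form you describe is not evident.
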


This theorem should be contrasted with the work of Sherstov about making polynomials robust to noise \cite{sherstov13making}. In that work, Sherstov showed that approximate polynomial degree never requires a log factor in the noisy input setting, nor in composition. That is to say, he improved the simple bound $\AWPP^*(f)\le O(\AWPP(f)\cdot\log n)$ to $\AWPP^*(f)\le O(\AWPP(f))$ for all Boolean functions $f$, and showed $\AWPP(f\circ g^n)\le O(\AWPP(f)\cdot \AWPP(g))$. In contrast, for conical juntas (nonnegative linear combinations of conjunctions), \autoref{thm:maj} shows that in a strong sense, the simple bound $\WAPP_\epsilon^*(f)\le O(\WAPP_\delta(f)\cdot\log n)$ (for all constants $0<\delta<\varepsilon<1/2$ and total Boolean functions $f$) cannot be improved: $\WAPP_\varepsilon^\dagger(f)\ge\Omega(\WAPP_0(f)\cdot\log n)$ for some constant $\varepsilon$ and some total $f$, namely $f=\Maj$. Thus unlike polynomials, conical juntas cannot be made robust to noise.

Our proof of \autoref{thm:maj} (in \autoref{sec:maj}) introduces some technical ideas that may be useful for other randomized query complexity lower bounds.

By a simple reduction, \autoref{thm:maj} for $g=\GapOr$ implies the same for $g=\GapMaj$ (with $\BPP(g)=1$ instead of $\RP(g)=1$ at the end of the statement), but we do not know of a simpler direct proof for the latter result. \autoref{thm:maj} cannot be strengthened to have $\PostBPP$ in place of $\WAPP$, because $\PostBPP(\Maj\circ\GapMaj^n)\le O(n)$. However, \autoref{thm:maj} does hold with $\Xor$ in place of $\Maj$, by the same proof.


\section{Proof of \protect{\autoref{thm:xor}}: \texorpdfstring{$\Xor$}{Xor} sometimes necessitates amplification} \label{sec:xor}

We first discuss a standard technique for proving randomized query complexity lower bounds, which will be useful in the proof of \autoref{thm:xor}. For any conjunction $C\colon\{0,1\}^k\to\{0,1\}$ and distribution $\calD$ over $\{0,1\}^k$, we write $C(\calD)\coloneqq\E_{x\sim\calD}[C(x)]=\Pr_{x\sim\calD}[C(x)=1]$. The number of literals in a conjunction is called its width.

\begin{fact} \label{fact:postbpp}
Let $h\colon\{0,1\}^k\to\{0,1\}$ be a partial function, and for each $z\in\{0,1\}$ let $\calD_z$ be a distribution over $h^{-1}(z)$. Then for every $\varepsilon$ there exist a conjunction $C$ of width $\PostBPP_\varepsilon(h)$ and a $z\in\{0,1\}$ such that $\varepsilon\cdot C(\calD_z)\ge(1-\varepsilon)\cdot C(\calD_{1-z})$ and $C(\calD_z)>0$.
\end{fact}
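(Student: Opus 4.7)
The plan is to unroll the definition of $\PostBPP_\varepsilon(h)$, write it as an explicit distribution over decision trees, and extract the desired conjunction by an averaging argument over the leaves. Set $d \coloneqq \PostBPP_\varepsilon(h)$. A $\PostBPP_\varepsilon$ decision tree for $h$ is a distribution $\mu$ over depth-$d$ deterministic trees $T$ whose leaves are labeled with outputs from $\{0,1,\bot\}$, satisfying, for every $x$ in the domain of $h$, $\Pr_{T\sim\mu}[T(x)=\bot]<1$ and $\Pr_{T\sim\mu}[T(x)=h(x)]\ge(1-\varepsilon)\Pr_{T\sim\mu}[T(x)\ne\bot]$. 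Each leaf $\ell$ of a tree $T$ is characterized by the conjunction $C_\ell$ of the literals along its root-to-leaf path, which has width at most $d$. For $z,b\in\{0,1\}$, define
\[
P_{z,b} \;\coloneqq\; \sum_{(T,\ell)\,:\,\ell\text{ is a $b$-labeled leaf of }T} \mu(T)\cdot C_\ell(\calD_z)
\;=\; \E_{x\sim\calD_z}\bigl[\Pr_{T\sim\mu}[T(x)=b]\bigr].
\]

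Next, I would translate the pointwise $\PostBPP$ guarantee into inequalities among these four quantities. For each $x\in h^{-1}(z)$ the guarantee rearranges to $(1-\varepsilon)\Pr[T(x)=1-z]\le\varepsilon\Pr[T(x)=z]$; integrating over $x\sim\calD_z$ gives
\[
(1-\varepsilon)\,P_{0,1}\;\le\;\varepsilon\,P_{0,0}
\qquad\text{and}\qquad
(1-\varepsilon)\,P_{1,0}\;\le\;\varepsilon\,P_{1,1}.
\]
Moreover, $\Pr_{T\sim\mu}[T(x)\ne\bot]>0$ pointwise gives $P_{z,0}+P_{z,1}>0$, which combined with the previous line forces $P_{z,z}>0$ for each $z$.

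Multiplying the two inequalities and rearranging yields $\max(P_{0,0}/P_{1,0},\,P_{1,1}/P_{0,1})\ge(1-\varepsilon)/\varepsilon$. Without loss of generality assume the first ratio is the larger one, i.e., $\varepsilon\,P_{0,0}\ge(1-\varepsilon)\,P_{1,0}$. Expanding the definitions, this reads
\[
\sum_{(T,\ell)\,:\,\ell\text{ is a $0$-labeled leaf of }T}\mu(T)\cdot\bigl(\varepsilon\,C_\ell(\calD_0)\;-\;(1-\varepsilon)\,C_\ell(\calD_1)\bigr)\;\ge\;0.
\]
Since $P_{0,0}>0$, some term with $C_\ell(\calD_0)>0$ must appear, and a routine averaging argument (take the $(T,\ell)$ in the sum maximizing $C_\ell(\calD_0)/C_\ell(\calD_1)$) produces a single $0$-labeled leaf $\ell$ with $C_\ell(\calD_0)>0$ and $\varepsilon\,C_\ell(\calD_0)\ge(1-\varepsilon)\,C_\ell(\calD_1)$. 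Setting $C\coloneqq C_\ell$ and $z\coloneqq 0$ proves the fact; the symmetric case yields $z=1$.

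The only delicate point is this last averaging step in the presence of possibly-zero terms: one has to ensure that the chosen leaf actually has $C(\calD_z)>0$ rather than being a degenerate term that vanishes against both distributions. This is exactly where the strict positivity $P_{z,z}>0$ is used, and it is precisely what the ``$<1$'' in the non-abort condition of $\PostBPP$ buys us; without that, the conclusion would fail for trivial reasons.
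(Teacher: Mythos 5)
Your proof is correct and follows essentially the same strategy as the paper's: decompose the randomized tree into leaf conjunctions, integrate the pointwise $\PostBPP$ error constraint against the input distribution, and extract a good leaf by an averaging argument (with the strict positivity of $P_{z,z}$ guaranteeing the chosen conjunction is nondegenerate). The only difference is cosmetic: you combine the two per-$z$ inequalities by multiplying them and then case-splitting on which ratio is large, whereas the paper averages the input over the mixture $\tfrac12\calD_0+\tfrac12\calD_1$ and conditions on a uniformly random leaf, so the choice of $z$ simply drops out as the label of the winning leaf without any case analysis.
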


\begin{proof}
Abbreviate $\PostBPP_\varepsilon(h)$ as $r$. Fix a randomized decision tree of cost $r$ computing $h$ with error $\varepsilon$ conditioned on not aborting, and assume w.l.o.g.\ that for each outcome of the randomness, the corresponding deterministic tree is a perfect tree with $2^r$ leaves, all at depth $r$. Consider the probability space where we sample input $x$ from the mixture $\frac{1}{2}\calD_0+\frac{1}{2}\calD_1$, sample a deterministic decision tree $T$ as an outcome of the randomized decision tree, and sample a uniformly random leaf $\ell$ of $T$. Let $A$ be the indicator random variable for the event that $\ell$ is the leaf reached by $T(x)$ and its label is $h(x)$. Let $B$ be the indicator random variable for the event that $\ell$ is the leaf reached by $T(x)$ and its label is $1-h(x)$. Conditioned on any particular $x$ and $T$, the probability that $\ell$ is the leaf reached by $T(x)$ is $2^{-r}$. Thus conditioned on any particular $x$, if the non-abort probability is $t_x>0$ then $\E[A\mid x]\ge 2^{-r}t_x(1-\varepsilon)$ and $\E[B\mid x]\le 2^{-r}t_x\varepsilon$ and thus $\varepsilon\cdot\E[A\mid x]-(1-\varepsilon)\cdot\E[B\mid x]\ge 0$. Over the whole probability space, we have $\varepsilon\cdot\E[A]-(1-\varepsilon)\cdot\E[B]\ge 0$, so by linearity the same must hold conditioned on some particular $T$ and $\ell$ with $\E[A\mid T,\ell]>0$. Let $C$ be the conjunction of width $r$ such that $C(x)=1$ iff $T(x)$ reaches $\ell$, and let $z$ be the label of $\ell$. Then we have $C(\calD_z)=\E[A\mid\text{$T,\ell$ and $h(x)=z$}]=2\cdot\E[A\mid T,\ell]>0$ and similarly $C(\calD_{1-z})=2\cdot\E[B\mid T,\ell]$. Thus \[\varepsilon\cdot C(\calD_z)-(1-\varepsilon)\cdot C(\calD_{1-z})~=~2\cdot\bigl(\varepsilon\cdot\E[A\mid T,\ell]+(1-\varepsilon)\cdot\E[B\mid T,\ell]\,\bigr)~\ge~0.\]
\end{proof}

Now we work toward proving \autoref{thm:xor}. Throughout, $n$ is the input length of $\Xor$, and $m$ is the input length of $\GapMaj$. We have $\BPP(\GapMaj)\le 1$ by outputting the bit at a uniformly random position from the input. We describe one way of seeing that $\BPP_{1/n}(\GapMaj)\ge\PostBPP_{1/n}(\GapMaj)\ge\Omega(\log n)$ provided $m\ge\log n$. For $z\in\{0,1\}$, define $\calG_z$ as the uniform distribution over $\GapMaj^{-1}(z)$.

\begin{fact} \label{fact:gapmaj}
For every conjunction $C\colon\{0,1\}^m\to\{0,1\}$ of width $w\le m/7$ and for each $z\in\{0,1\}$, \[C(\calG_z)~\le~3^w\cdot C(\calG_{1-z}).\]
\end{fact}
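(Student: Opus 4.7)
The plan is to reduce the inequality to an explicit ratio of binomial coefficients and bound it termwise. Write $C$ as a conjunction that fixes $a$ bits to $1$ and $b$ bits to $0$, so $a+b=w$. Since $\calG_z$ is uniform on strings of Hamming weight $m_z\in\{m/3,2m/3\}$, a direct count gives $C(\calG_z)=\binom{m-w}{m_z-a}/\binom{m}{m_z}$. Using $\binom{m}{m/3}=\binom{m}{2m/3}$ together with the identity $\binom{m-w}{2m/3-a}=\binom{m-w}{m/3-b}$, this simplifies to
\[
\frac{C(\calG_0)}{C(\calG_1)}\;=\;\frac{\binom{m-w}{m/3-a}}{\binom{m-w}{m/3-b}}.
\]

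By the bit-flip symmetry $x\mapsto\bar x$, which swaps $\calG_0\leftrightarrow\calG_1$ along with $a\leftrightarrow b$, it suffices to handle the case $a\ge b$; the $a<b$ case then follows by applying the bound to the flipped conjunction. Under $a\ge b$, both indices $m/3-a$ and $m/3-b$ lie to the left of the center $(m-w)/2$ of the row (since $w\le m/3$), so the ratio above is at most $1$; this yields $C(\calG_0)\le 3^w\cdot C(\calG_1)$ for free. The nontrivial direction is to bound $C(\calG_1)/C(\calG_0)\le 3^w$.

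Expanding the binomials via factorials and pairing the resulting terms,
\[
\frac{C(\calG_1)}{C(\calG_0)}\;=\;\frac{(m/3-a)!\,(2m/3-b)!}{(m/3-b)!\,(2m/3-a)!}\;=\;\prod_{i=1}^{a-b}\frac{2m/3-a+i}{m/3-a+i}\;=\;\prod_{i=1}^{a-b}\!\left(1+\frac{m/3}{m/3-a+i}\right).
\]
For each $i\ge 1$ and $a\le w\le m/7$, the denominator satisfies $m/3-a+i\ge m/3-w+1\ge 4m/21$, so each factor is at most $1+(m/3)/(4m/21)=11/4$. Since there are at most $a-b\le w$ factors, the product is at most $(11/4)^w<3^w$, as desired.

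The one subtle point is choosing the right pairing: the factorial-based expansion above yields factors of the form $(2m/3-a+i)/(m/3-a+i)$, each bounded by $11/4<3$ under $w\le m/7$, whereas the naive stepwise expansion $\binom{N}{k+1}/\binom{N}{k}=(N-k)/(k+1)$ produces individual factors as large as $(2m/3)/(m/3-w+1)\approx 7/2$, which is too big to give $3^w$ directly. Once the correct pairing is identified, the argument is essentially arithmetic, and the constraint $w\le m/7$ is what forces the common base of the exponential below $3$.
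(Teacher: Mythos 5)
Your proof is correct. Like the paper's argument, it reduces to binomial-coefficient arithmetic via $C(\calG_z)=\binom{m-w}{m_z-a}/\binom{m}{m_z}$, but the bookkeeping is organized differently. The paper bounds $C(\calG_0)$ and $C(\calG_1)$ \emph{separately} against the worst-case literal split --- shifting the numerator binomials to $\binom{m-w}{m/3}$ and $\binom{m-w}{m/3-w}$ respectively, giving $C(\calG_0)\le(2/3)^w$ and $C(\calG_1)\ge(2/9)^w$ --- and then divides to get $3^w$. You instead use the reflection identity $\binom{m-w}{2m/3-a}=\binom{m-w}{m/3-b}$ to collapse $C(\calG_1)/C(\calG_0)$ into a telescoping product of $a-b$ factors $1+\frac{m/3}{m/3-a+i}\le 11/4$. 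This makes the $a\ge b$ direction immediate from monotonicity of $\binom{m-w}{\cdot}$ on the left half of the row, and yields the slightly sharper bound $(11/4)^{|a-b|}$, depending only on the imbalance of positive vs.\ negative literals rather than on the total width. Both arguments invoke $w\le m/7$ at the analogous step (keeping the quantity $m/3-w$ bounded away from $0$). The paper's version gives absolute probability bounds on $C(\calG_0)$ and $C(\calG_1)$ as a byproduct; yours is tighter about the ratio itself and, as you note, is sensitive to the correct pairing of terms in the factorial expansion.
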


\begin{proof}
By symmetry we just consider $z=0$. Suppose $C$ has $u$ positive literals and $v$ negative literals ($u+v=w$). Then \[\textstyle C(\calG_0)~=~\binom{m-w}{m/3-u}/\binom{m}{m/3}~\le~\binom{m-w}{m/3}/\binom{m}{m/3}~=~\frac{(2m/3)\cdot(2m/3-1)\cdots(2m/3-w+1)}{m\cdot(m-1)\cdots(m-w+1)}~\le~(2/3)^w,\] \begin{align*}\textstyle C(\calG_1)~=~\binom{m-w}{m/3-v}/\binom{m}{m/3}~\ge~\binom{m-w}{m/3-w}/\binom{m}{m/3}~&\textstyle=~\frac{(m/3)\cdot(m/3-1)\cdots(m/3-w+1)}{m\cdot(m-1)\cdots(m-w+1)}\\&\textstyle\ge~\bigl(\frac{m/3-w}{m-w}\bigr)^w~\ge~\bigl(\frac{m/3-m/7}{m-m/7}\bigr)^w~=~(2/9)^w.\end{align*} Thus $C(\calG_0)/C(\calG_1)\le\bigl(\frac{2/3}{2/9}\bigr)^w=3^w$.
\end{proof}

Combining \autoref{fact:postbpp} and \autoref{fact:gapmaj} (using $h=\GapMaj$, $k=m$, $\calD_z=\calG_z$, $\varepsilon=1/n$, and $w=\PostBPP_\varepsilon(h)$) implies that $(1-\varepsilon)/\varepsilon\le 3^w$, in other words we have $\PostBPP_{1/n}(\GapMaj)\ge\log_3(n(1-1/n))\ge\Omega(\log n)$, provided $w\le m/7$. If $w>m/7$ then $\PostBPP_{1/n}(\GapMaj)\ge\Omega(\log n)$ holds anyway provided $m\ge\log n$.

Hence, our result can be restated as follows.

\renewcommand{\customthmname}{Theorem \ref*{thm:xor} (Restated)}
\begin{customthm*} \label{thm:xor:restated}
$\PostBPP(\Xor\circ\GapMaj^n)\ge\Omega(n\log n)$ provided $m\ge\log n$.
\end{customthm*}

\begin{proof}
We show $\PostBPP(\Xor\circ\GapMaj^n)>\frac{1}{14}n\log n$. By \autoref{fact:postbpp} (using $h=\Xor\circ\GapMaj^n$, $k=nm$, and $\varepsilon=1/3$) it suffices to exhibit for each $z\in\{0,1\}$ a distribution $\calD_z$ over $(\Xor\circ\GapMaj^n)^{-1}(z)$, such that for every conjunction $C$ of width $\le\frac{1}{14}n\log n$ and for each $z\in\{0,1\}$, either $C(\calD_z)<2C(\calD_{1-z})$ or $C(\calD_z)=0$. Letting $\calF_z$ be the uniform distribution over $\Xor^{-1}(z)$, define $\calD_z$ as the mixture over $y\sim\calF_z$ of $\calG_y\coloneqq\calG_{y_1}\times\cdots\times\calG_{y_n}$ (i.e., $(x^1,\ldots,x^n)\sim\calG_y$ is sampled by independently sampling $x^i\sim\calG_{y_i}$ for all $i$). Put succinctly, $\calD_z\coloneqq\E_{y\sim\calF_z}[\calG_y]$. Letting $\calG\coloneqq\frac{1}{2}\calG_0+\frac{1}{2}\calG_1$ and $\calF\coloneqq\frac{1}{2}\calF_0+\frac{1}{2}\calF_1$ and $\calD\coloneqq\frac{1}{2}\calD_0+\frac{1}{2}\calD_1$, we have $\calD=\calG^n$ since $\calF$ is uniform over $\{0,1\}^n$. Since $C(\calD)=\frac{1}{2}C(\calD_0)+\frac{1}{2}C(\calD_1)$, our goal of showing ``$\frac{1}{2}C(\calD_0)<C(\calD_1)<2C(\calD_0)$ or $C(\calD_0)=C(\calD_1)=0$'' is equivalent to showing ``$\frac{2}{3}C(\calD)<C(\calD_1)<\frac{4}{3}C(\calD)$ or $C(\calD)=0$.''

Now consider any conjunction $C$ of width $w\le\frac{1}{14}n\log n$ such that $C(\calD)>0$, and write $C(x^1,\ldots,x^n)=\prod_iC_i(x^i)$ where $C_i$ is a conjunction. Since $C_i(\calG)=\frac{1}{2}C_i(\calG_0)+\frac{1}{2}C_i(\calG_1)$, for each $y_i\in\{0,1\}$ we can write $C_i(\calG_{y_i})=(1+a_i(-1)^{y_i})C_i(\calG)$ for some number $a_i$ with $|a_i|\le 1$ (so $a_i\ge 0$ iff $C_i(\calG_0)\ge C_i(\calG_1)$). Let $w_i$ be the width of $C_i$, so $\sum_iw_i=w\le\frac{1}{14}n\log n$. Then $w_i\le\frac{1}{7}\log n\le m/7$ for at least $n/2$ many values of $i$, and for such $i$ note that by \autoref{fact:gapmaj}, $C_i(\calG_{y_i})\le 3^{(\log n)/7}\cdot C_i(\calG_{1-y_i})\le n^{1/4}\cdot C_i(\calG_{1-y_i})$ for each $y_i\in\{0,1\}$. The latter implies that $|a_i|\le 1-2/(n^{1/4}+1)\le 1-n^{-1/4}$. Thus \[\textstyle\bigl|\prod_i a_i\bigr|~=~\prod_i|a_i|~\le~(1-n^{-1/4})^{n/2}~\le~e^{-n^{3/4}/2}~\le~1/4.\] For $S\subseteq[n]$, let $\chi_S\colon\{0,1\}^n\to\{1,-1\}$ be the character $\chi_S(y)\coloneqq\prod_{i\in S}(-1)^{y_i}=(-1)^{\sum_{i\in S}y_i}$. Note that $\E_{y\sim\calF_1}[\chi_S]$ is $1$ if $S=\emptyset$, is $-1$ if $S=[n]$, and is $0$ otherwise. Putting everything together,
\begin{align*}
\textstyle C(\calD_1)~&\textstyle=~\E_{y\sim\calF_1}[C(\calG_y)]~=~\E_{y\sim\calF_1}\bigl[\prod_iC_i(\calG_{y_i})\bigr]~=~\E_{y\sim\calF_1}\bigl[\prod_i(1+a_i(-1)^{y_i})C_i(\calG)\bigr]\\[5pt]
&\textstyle=~\bigl(\prod_iC_i(\calG)\bigr)\cdot\E_{y\sim\calF_1}\bigl[\sum_{S\subseteq[n]}\prod_{i\in S}a_i(-1)^{y_i}\bigr]~=~C(\calD)\cdot\sum_{S\subseteq[n]}\bigl(\prod_{i\in S}a_i\bigr)\cdot\E_{y\sim\calF_1}[\chi_S(y)]\\[5pt]
&\textstyle=~C(\calD)\cdot\bigl(1-\prod_{i\in[n]}a_i\bigr)~\in~C(\calD)\cdot(1\pm 1/4)
\end{align*}
which implies $\frac{2}{3}C(\calD)<C(\calD_1)<\frac{4}{3}C(\calD)$ since we are assuming $C(\calD)>0$. This concludes the proof of \hyperref[thm:xor:restated]{Theorem \ref*{thm:xor}}.
\end{proof}

Using strong LP duality (as in \cite{gavinsky14enroute}), it can be seen that \autoref{fact:postbpp} is a tight lower bound method up to constant factors: $\PostBPP_\varepsilon(h)\ge\Omega(c)$ iff it is possible to prove this via \autoref{fact:postbpp} by exhibiting ``hard input distributions'' $\calD_0$ and $\calD_1$ (as we did for $\GapMaj$ in \autoref{fact:gapmaj}). Since this was the only property of $g$ used in the proof of \hyperref[thm:xor:restated]{Theorem \ref*{thm:xor}}, this implies that $\BPP(\Xor\circ g^n)\ge\PostBPP(\Xor\circ g^n)\ge\Omega(n\cdot\PostBPP_{1/n}(g))$ holds for all $g$, as we mentioned in \autoref{sec:intro:xor}.


\section{Proof of \autoref{thm:maj}: \texorpdfstring{$\Maj$}{MAJ} sometimes necessitates amplification} \label{sec:maj}

We first discuss a standard technique for proving randomized query complexity lower bounds, which will be useful in the proof of \autoref{thm:maj}. For any conjunction $C\colon\{0,1\}^k\to\{0,1\}$ and distribution $\calD$ over $\{0,1\}^k$, we write $C(\calD)\coloneqq\E_{x\sim\calD}[C(x)]=\Pr_{x\sim\calD}[C(x)=1]$. The number of literals in a conjunction is called its width.

\begin{fact} \label{fact:wapp}
Let $h\colon\{0,1\}^k\to\{0,1\}$ be a partial function, and let $\calD_0$, $\calD_1$, $\calD_2$ be three distributions, over $h^{-1}(0)$, $h^{-1}(1)$, and $h^{-1}(0)\cup h^{-1}(1)$ respectively. Then for every $0<\varepsilon\le 1/10$ there exists a conjunction $C$ of width $\WAPP_\varepsilon(h)$ such that $C(\calD_0)\le\delta\cdot C(\calD_1)$ and $C(\calD_2)\le(1+\delta)\cdot C(\calD_1)$ and $C(\calD_1)>0$, where $\delta\coloneqq 2\sqrt{\varepsilon}$.
\end{fact}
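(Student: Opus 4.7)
The plan is to mirror the proof of \autoref{fact:postbpp} but adapt it to the $\WAPP_\varepsilon$ setting, where we have a scaled accept probability instead of a conditional one, and we must additionally handle the third distribution $\calD_2$. First I would fix an optimal $\WAPP_\varepsilon$ randomized decision tree for $h$ of cost $r := \WAPP_\varepsilon(h)$, written as a mixture $\sum_j q_j T_j$ of deterministic perfect trees of depth $r$ (with $\{0,1\}$-labeled leaves). For each $1$-labeled leaf $\ell$ of $T_j$, let $C_{j,\ell}$ be the width-$r$ conjunction that is $1$ iff $T_j(x)$ reaches $\ell$, and set $\mu(j,\ell) := q_j$. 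Then on every input $x$ the accept probability of the tree equals $\sum \mu(j,\ell)\,C_{j,\ell}(x)$, and the $\WAPP_\varepsilon$ guarantee (with threshold parameter $t > 0$) translates directly, via taking expectations over the three distributions, into the three unnormalized inequalities
\[
\sum \mu\, C(\calD_0) \,\le\, \varepsilon t, \qquad
\sum \mu\, C(\calD_1) \,\ge\, (1-\varepsilon) t, \qquad
\sum \mu\, C(\calD_2) \,\le\, t,
\]
where the last uses that the accept probability is bounded by $t$ on every input in $h^{-1}(0)\cup h^{-1}(1)$.

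The rest is a weighted averaging step. Restrict attention to pairs with $C_{j,\ell}(\calD_1) > 0$ and define the probability measure $\nu(j,\ell) := \mu(j,\ell)\,C_{j,\ell}(\calD_1)/Z$ with normalizer $Z := \sum \mu\, C(\calD_1) \ge (1-\varepsilon)t > 0$. For each of the two ``bad'' events, I would run a pointwise-then-sum Markov-style argument directly in the unnormalized measure (avoiding dividing by $C(\calD_1)$ inside the sum): on the set where $C(\calD_0)/C(\calD_1) > \delta$, pointwise $\mu C(\calD_1) < \mu C(\calD_0)/\delta$, so summing and using the first inequality above gives $\Pr_\nu[C(\calD_0)/C(\calD_1) > \delta] < \varepsilon/(\delta(1-\varepsilon)) = \sqrt{\varepsilon}/(2(1-\varepsilon))$; the same trick with $\calD_2$ in place of $\calD_0$ and threshold $1+\delta$ gives $\Pr_\nu[C(\calD_2)/C(\calD_1) > 1+\delta] < 1/((1+\delta)(1-\varepsilon))$. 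If the sum of these two probabilities is strictly less than $1$, a union bound produces a single pair $(j,\ell)$ for which $C := C_{j,\ell}$ satisfies all three required conclusions, including $C(\calD_1) > 0$.

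The main obstacle, and the reason for the specific choice $\delta = 2\sqrt{\varepsilon}$ and the hypothesis $\varepsilon \le 1/10$, is showing the two bad probabilities sum to strictly less than $1$. This is not automatic: unlike the $\rho_0$ ratio, the ratio $\rho_2 = C(\calD_2)/C(\calD_1)$ has mean only $\le 1/(1-\varepsilon)$ under $\nu$, which is barely larger than $1$, so a crude Markov bound by itself buys almost nothing; one really needs that the saving $\delta/(1+\delta) \approx \sqrt{\varepsilon}$ from the $\rho_2$ bound exceeds the combined losses $\sqrt{\varepsilon}/2 + \varepsilon$ from the $\rho_0$ bound and the $1/(1-\varepsilon)$ factor. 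After multiplying through by $1-\varepsilon$, the required inequality reads $\sqrt{\varepsilon}/2 + 1/(1+2\sqrt{\varepsilon}) < 1 - \varepsilon$, and a short calculation shows this holds for all $\varepsilon \le 1/10$ (with a little slack) and fails near $\varepsilon = 1/4$ — which is precisely why the statement caps $\varepsilon$ at $1/10$. The choice $\delta = 2\sqrt{\varepsilon}$ is the natural balance point for this trade-off.
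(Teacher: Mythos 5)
Your proposal is correct, and the overall architecture matches the paper's: decompose the randomized tree into width-$r$ conjunctions, translate the $\WAPP$ guarantee into the three moment inequalities $\sum\mu\,C(\calD_0)\le\varepsilon t$, $\sum\mu\,C(\calD_1)\ge(1-\varepsilon)t$, $\sum\mu\,C(\calD_2)\le t$, and then average to extract a single good conjunction. Where you differ is in the final averaging step, which the paper isolates as a separate lemma (its \autoref{fact:rv}). The paper argues by contradiction: it partitions the support of $P_1$ into a set $U$ where $P_0>\delta P_1$ and a set $V$ where $P_2>(1+\delta)P_1$, observes that either $\E[P_1 I_U]\ge\sqrt{\varepsilon}\,\E[P_1]$ or $\E[P_1 I_V]\ge(1-\sqrt{\varepsilon})\,\E[P_1]$, and derives a contradiction with the moment bounds in each case. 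You instead reweight by $\nu\propto\mu\,C(\calD_1)$, apply Markov's inequality to the two ratio events $C(\calD_0)/C(\calD_1)>\delta$ and $C(\calD_2)/C(\calD_1)>1+\delta$ separately, and finish with a union bound. The two are morally the same pigeonhole argument (your $\nu$-probabilities are exactly the paper's $\E[P_1 I_E]/\E[P_1]$), but yours avoids the ad hoc $\sqrt{\varepsilon}$ threshold and is in fact slightly tighter: your required inequality $\sqrt{\varepsilon}/2+1/(1+2\sqrt{\varepsilon})<1-\varepsilon$ holds for all $\varepsilon<1/4$, whereas the paper's $(1+\delta)(1-\sqrt{\varepsilon})(1-\varepsilon)>1$ fails a bit above $\varepsilon\approx 0.1$. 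Both comfortably cover the stated range $\varepsilon\le 1/10$, so the difference is cosmetic here, but your packaging is arguably cleaner.
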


The key calculation underlying the proof of \autoref{fact:wapp} is encapsulated in the following:

\begin{fact} \label{fact:rv}
Let $P_0$, $P_1$, $P_2$ be three jointly distributed nonnegative random variables with $\E[P_1]>0$. For any $0<\varepsilon\le 1/10$, if $\E[P_0]\le\varepsilon$ and $\E[P_1]\ge 1-\varepsilon$ and $\E[P_2]\le 1$, then there exists an outcome $o$ such that $P_0(o)\le\delta\cdot P_1(o)$ and $P_2(o)\le(1+\delta)\cdot P_1(o)$ and $P_1(o)>0$, where $\delta\coloneqq 2\sqrt{\varepsilon}$.
\end{fact}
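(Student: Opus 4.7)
The plan is to weight the probability space by $P_1$ so that the requirement $P_1(o) > 0$ becomes automatic, and then show that the two ``bad'' events cannot together cover all of the $P_1$-weight. Concretely, define $B_1 \coloneqq \{P_0 > \delta \cdot P_1\}$ and $B_2 \coloneqq \{P_2 > (1+\delta) \cdot P_1\}$; an outcome in $B_1^c \cap B_2^c$ automatically satisfies both displayed inequalities in the statement, so I just need to exhibit an $o$ in this intersection for which additionally $P_1(o) > 0$.

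For the two Markov-type bounds, on $B_1$ the defining inequality gives $P_1 < P_0/\delta$, so $\E[P_1 \mathbf{1}_{B_1}] \le \E[P_0]/\delta \le \varepsilon/\delta$; similarly on $B_2$ we have $P_1 < P_2/(1+\delta)$, so $\E[P_1 \mathbf{1}_{B_2}] \le \E[P_2]/(1+\delta) \le 1/(1+\delta)$. Combining with $\E[P_1] \ge 1-\varepsilon$ and subtracting gives
\[
\E\bigl[P_1 \mathbf{1}_{B_1^c \cap B_2^c}\bigr] \;\ge\; (1-\varepsilon) \;-\; \varepsilon/\delta \;-\; 1/(1+\delta),
\]
and any outcome $o$ that contributes positively to this expectation must have $P_1(o) > 0$ together with both desired bounds, so I would read off such an $o$ to finish.

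The one thing to verify --- and the only substantive obstacle --- is that the right-hand side above is strictly positive for $\delta = 2\sqrt{\varepsilon}$ whenever $\varepsilon \le 1/10$. Clearing denominators reduces this to $\delta^2 > \varepsilon(1+\delta)^2$, equivalently $\sqrt{\varepsilon}(1+\delta) < \delta$, which for $\delta = 2\sqrt{\varepsilon}$ simplifies to $\sqrt{\varepsilon} < 1/2$ and holds with room to spare at $\varepsilon \le 1/10$. The choice $\delta = 2\sqrt{\varepsilon}$ is calibrated precisely so the two Markov contributions $\varepsilon/\delta = \sqrt{\varepsilon}/2$ and $1/(1+\delta) \approx 1-2\sqrt{\varepsilon}$ together fit inside the budget $1-\varepsilon$; no cleverer argument seems necessary.
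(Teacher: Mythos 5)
Your proof is correct and rests on the same core mechanism as the paper's: bound the $P_1$-weighted mass of each bad event via a Markov-type inequality (using $\E[P_0]\le\varepsilon$ for $B_1$ and $\E[P_2]\le 1$ for $B_2$), and conclude that the good region $B_1^c\cap B_2^c\cap\{P_1>0\}$ must carry positive $P_1$-weight. The execution differs from the paper's in a pleasant way: the paper argues by contradiction, partitions $\{P_1>0\}$ into disjoint pieces $U$ and $V$, and does a pigeonhole case split on whether $\E[P_1 I_U]\ge\sqrt{\varepsilon}\,\E[P_1]$ or $\E[P_1 I_V]\ge(1-\sqrt{\varepsilon})\,\E[P_1]$, each case forcing one of $\E[P_0],\E[P_2]$ too large. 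You instead subtract both Markov bounds from $\E[P_1]$ at once, which avoids the case split entirely and reduces the verification to the single polynomial inequality $\delta^2>\varepsilon(1+\delta)^2$, i.e.\ $\sqrt{\varepsilon}<1/2$ when $\delta=2\sqrt{\varepsilon}$. Your approach pays a small price by union-bounding over the (possibly overlapping) events $B_1,B_2$ where the paper uses a disjoint partition, but this loss is harmless; in fact your condition even covers all $\varepsilon<1/4$, somewhat beyond the paper's stated range $\varepsilon\le 1/10$, and the algebra is cleaner than the paper's ``a little calculus'' step. The final extraction of an outcome from a positive weighted expectation is sound for the discrete probability spaces that arise in the application.
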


\begin{proof}[Proof of \autoref{fact:rv}]
Let $W\coloneqq\{o\,:\,P_1(o)>0\}\ne\emptyset$. Suppose for contradiction that for every outcome $o\in W$, either $P_0(o)>\delta\cdot P_1(o)$ or $P_2(o)>(1+\delta)\cdot P_1(o)$. Then $W$ can be partitioned into events $U$ and $V$ such that $P_0(o)>\delta\cdot P_1(o)$ for every $o\in U$ and $P_2(o)>(1+\delta)\cdot P_1(o)$ for every $o\in V$. Letting $I_U$ and $I_V$ be the indicator random variables for these events, we have $\E[P_1\cdot I_U]+\E[P_1\cdot I_V]=\E[P_1]$ and thus either:
\begin{mylist}{2}
\item $\E[P_1\cdot I_U]\ge\sqrt{\varepsilon}\cdot\E[P_1]$, in which case \[\E[P_0]~\ge~\E[P_0\cdot I_U]~>~\delta\cdot\E[P_1\cdot I_U]~\ge~\delta\cdot\sqrt{\varepsilon}\cdot(1-\varepsilon)~=~2\varepsilon(1-\varepsilon)~>~\varepsilon,~~\text{or}\]
\item $\E[P_1\cdot I_V]\ge(1-\sqrt{\varepsilon})\cdot\E[P_1]$, in which case \[\E[P_2]~\ge~\E[P_2\cdot I_V]~>~(1+\delta)\cdot\E[P_1\cdot I_V]~\ge~(1+\delta)\cdot(1-\sqrt{\varepsilon})\cdot(1-\varepsilon)~>~1\] where the last inequality can be verified by a little calculus for $0<\varepsilon\le 1/10$.
\end{mylist}
Both cases yield a contradiction.
\end{proof}

\begin{proof}[Proof of \autoref{fact:wapp}]
Abbreviate $\WAPP_\varepsilon(h)$ as $r$. Fix a randomized decision tree of cost $r$ computing $h$ with error parameter $\varepsilon$ and threshold $t>0$ (from the definition of $\WAPP$), and assume w.l.o.g.\ that for each outcome of the randomness, the corresponding deterministic tree is a perfect tree with $2^r$ leaves, all at depth $r$. Consider the probability space where we sample a deterministic decision tree $T$ as an outcome of the randomized decision tree, and sample a uniformly random leaf $\ell$ of $T$. For any outcome $T,\ell$, let $C_{T,\ell}$ be the conjunction of width $r$ such that $C_{T,\ell}(x)=1$ iff $T(x)$ reaches $\ell$. Define three joint random variables $P_0$, $P_1$, $P_2$ as \[P_j(T,\ell)~\coloneqq~\begin{cases}C_{T,\ell}(\calD_j)&\text{if the label of $\ell$ is $1$}\\0&\text{if the label of $\ell$ is $0$}\end{cases}.\] Conditioned on any particular $x$ and $T$, the probability that $\ell$ is the leaf reached by $T(x)$ is $2^{-r}$. Thus
\begin{align*}
\textstyle\E[P_j]~&\textstyle=~\Pr_{T,\ell,\,x\sim\calD_j}[\text{$\ell$ is the leaf reached by $T(x)$ and its label is $1$}]\\
&\textstyle=~\E_{x\sim\calD_j}\bigl[2^{-r}\cdot\Pr_T[\text{$T(x)$ outputs $1$}]\bigr]
\end{align*}
which implies $\E[P_0]\le 2^{-r}t\varepsilon$ and $\E[P_1]\ge 2^{-r}t(1-\varepsilon)$ and $\E[P_2]\le 2^{-r}t$. Applying \autoref{fact:rv} to the scaled random variables $(2^r/t)P_0$, $(2^r/t)P_1$, $(2^r/t)P_2$ yields an outcome $T,\ell$ such that \[P_0(T,\ell)~\le~\delta\cdot P_1(T,\ell)\quad\text{and}\quad P_2(T,\ell)~\le~(1+\delta)\cdot P_1(T,\ell)\quad\text{and}\quad P_1(T,\ell)~>~0.\] Since $P_1(T,\ell)>0$, the label of $\ell$ must be $1$, so we get \[C_{T,\ell}(\calD_0)~\le~\delta\cdot C_{T,\ell}(\calD_1)\quad\text{and}\quad C_{T,\ell}(\calD_2)~\le~(1+\delta)\cdot C_{T,\ell}(\calD_1)\quad\text{and}\quad C_{T,\ell}(\calD_1)~>~0.\]
\end{proof}

Now we work toward proving \autoref{thm:maj}. Throughout, $n$ is the input length of $\Maj$, and $m$ is the input length of $\GapOr$. We have $\RP(\GapOr)\le 1$ by outputting the bit at a uniformly random position from the input. We describe one way of seeing that $\BPP_{1/n}(\GapOr)\ge\WAPP_{1/n}(\overline{\GapOr})\ge\Omega(\log n)$ provided $m\ge\log n$ (this cannot be shown via \autoref{fact:postbpp}). For $z\in\{0,1\}$, define $\calG_z$ as the uniform distribution over $\GapOr^{-1}(z)$.

\begin{fact} \label{fact:gapor}
For every conjunction $C\colon\{0,1\}^m\to\{0,1\}$:
\begin{mylist}{2}
\item[\i] $C(\calG_0)\in\{0,1\}$.
\item[\ii] If $C(\calG_0)=1$ and $C$ has width $w\le m/4$ then $C(\calG_1)\ge 3^{-w}$.
\end{mylist}
\end{fact}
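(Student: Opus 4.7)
The plan is to exploit the fact that $\GapOr^{-1}(0)$ is a singleton. Since $\GapOr(x)=0$ only for $x=0^m$, the distribution $\calG_0$ is the point mass at $0^m$, so $C(\calG_0)=C(0^m)\in\{0,1\}$. This immediately gives part~\i and simultaneously gives the precise structural meaning of the hypothesis $C(\calG_0)=1$ in part~\ii: $C$ evaluates to $1$ at the all-zeros input, so $C$ cannot contain any positive literal. Hence $C=\bigwedge_{i\in S}\bar{x}_i$ for some set $S\subseteq[m]$ with $|S|=w$.

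Under this structural observation, part~\ii reduces to a clean hypergeometric calculation. A sample $x\sim\calG_1$ is a uniformly random weight-$m/2$ string, and $C(x)=1$ iff none of the $m/2$ ones of $x$ land in $S$. Thus
\[
C(\calG_1)~=~\binom{m-w}{m/2}\,\Big/\,\binom{m}{m/2}~=~\prod_{j=0}^{w-1}\frac{m/2-j}{m-j}.
\]
The step to execute is to bound each factor from below. A short derivative check shows $j\mapsto(m/2-j)/(m-j)$ is decreasing in $j$, so every factor is at least its value at $j=w-1$, and in particular at least $(m/2-w)/(m-w)$. The width restriction $w\le m/4$ yields $(m/2-w)/(m-w)\ge(m/4)/(3m/4)=1/3$, and the product is then at least $3^{-w}$, which is the desired bound.

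I do not anticipate a real obstacle: the only substantive observation is that $\GapOr^{-1}(0)=\{0^m\}$ collapses part~\i to a triviality and forces $C$ to be an all-negative conjunction in part~\ii, after which everything is routine counting. The only mild care needed is verifying that the pointwise lower bound $(m/2-j)/(m-j)\ge 1/3$ holds throughout the range $0\le j\le w-1$ when $w\le m/4$, which is immediate from monotonicity.
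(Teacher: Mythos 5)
Your proof is correct and follows the same route as the paper's: observe $\calG_0$ is the point mass at $0^m$ (giving part~\i and forcing $C$ to be all-negative in part~\ii), then lower-bound the hypergeometric ratio $\binom{m-w}{m/2}/\binom{m}{m/2}=\prod_{j=0}^{w-1}\frac{m/2-j}{m-j}$ by the per-factor estimate $\frac{m/2-w}{m-w}\ge 1/3$ when $w\le m/4$. The only cosmetic difference is that you explicitly justify the per-factor bound via monotonicity of $j\mapsto(m/2-j)/(m-j)$, which the paper leaves implicit.
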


\begin{proof}
\i: Note that $\calG_0$ is supported entirely on the input $0^m$. If $C$ has a positive literal then $C(\calG_0)=0$. If $C$ has only negative literals then $C(\calG_0)=1$.

\ii: Suppose $C$ has $w$ negative literals and no positive literals. Then \[\textstyle C(\calG_1)~=~\binom{m-w}{m/2}/\binom{m}{m/2}~=~\frac{(m/2)\cdot(m/2-1)\cdots(m/2-w+1)}{m\cdot(m-1)\cdots(m-w+1)}~\ge~\bigl(\frac{m/2-w}{m-w}\bigr)^w~\ge~\bigl(\frac{m/2-m/4}{m-m/4}\bigr)^w~=~3^{-w}.\]
\end{proof}

Combining \autoref{fact:wapp} and \autoref{fact:gapor} (using $h=\overline{\GapOr}$, $k=m$, $\calD_0=\calG_1$, $\calD_1=\calG_0$, $\calD_2$ is not needed, $\varepsilon=1/n$, and $w=\WAPP_\varepsilon(h)$) implies that $3^{-w}\le\delta$, in other words $\WAPP_{1/n}(\overline{\GapOr})\ge\log_3(1/(2\sqrt{1/n}))\ge\Omega(\log n)$, provided $w\le m/4$. If $w>m/4$ then $\WAPP_{1/n}(\overline{\GapOr})\ge\Omega(\log n)$ holds anyway provided $m\ge\log n$.

Hence, our result can be restated as follows.\footnote{Properties \i and \ii from \autoref{fact:gapor} are somewhat stronger than necessary for the proof of \autoref{thm:maj} to go through. The proof works, with virtually no modification, for any $g$ satisfying the following for some distributions $\calG_z$ over $g^{-1}(z)$ ($z\in\{0,1\}$): For every conjunction $C\colon\{0,1\}^m\to\{0,1\}$ such that $C(\calG_0)>0$, we have $C(\calG_1)\le C(\calG_0)$ and if furthermore $C$ has width $w\le m/4$ then $C(\calG_1)\ge 2^{-O(w)}\cdot C(\calG_0)$.}

\renewcommand{\customthmname}{Theorem \ref*{thm:maj} (Restated)}
\begin{customthm*} \label{thm:maj:restated}
$\WAPP_\varepsilon(\Maj\circ\GapOr^n)\ge\Omega(n\log n)$ for some constant $\varepsilon>0$ provided $m\ge\log n$.
\end{customthm*}

We show $\WAPP_{1/36}(\Maj\circ\GapOr^n)>\frac{1}{16}n\log n$. By \autoref{fact:wapp} (using $h=\Maj\circ\GapOr^n$, $k=nm$, $\varepsilon=1/36$, and $\delta=1/3$) it suffices to exhibit distributions $\calD_0$, $\calD_1$, $\calD_2$ over $h^{-1}(0)$, $h^{-1}(1)$, and $h^{-1}(0)\cup h^{-1}(1)$ respectively, such that for every conjunction $C$ of width $\le\frac{1}{16}n\log n$, either $C(\calD_0)>\frac{1}{3}C(\calD_1)$ or $C(\calD_2)>\frac{4}{3}C(\calD_1)$ or $C(\calD_1)=0$. Assume $n$ is even and for the tiebreaker, $\Maj(y)=1$ if $|y|=n/2$. For $\zeta\in\{0,1,2\}$ letting $\calF_\zeta$ be the uniform distribution over all $y\in\{0,1\}^n$ with $|y|=n/2-1+\zeta$ (so $\calF_0$, $\calF_1$, $\calF_2$ are over $\Maj^{-1}(0)$, $\Maj^{-1}(1)$, $\Maj^{-1}(1)$ respectively), define $\calD_\zeta$ as the mixture over $y\sim\calF_\zeta$ of $\calG_y\coloneqq\calG_{y_1}\times\cdots\times\calG_{y_n}$ (i.e., $(x^1,\ldots,x^n)\sim\calG_y$ is sampled by independently sampling $x^i\sim\calG_{y_i}$ for all $i$). Put succinctly, $\calD_\zeta\coloneqq\E_{y\sim\calF_\zeta}[\calG_y]$.

Now consider any conjunction $C$ of width $w\le\frac{1}{16}n\log n$, and write $C(x^1,\ldots,x^n)=\prod_iC_i(x^i)$ where $C_i$ is a conjunction. By \autoref{fact:gapor}.\i, $[n]$ can be partitioned into $A\cup B$ such that $C_i(\calG_0)=1$ for all $i\in A$, and $C_i(\calG_0)=0$ for all $i\in B$. Abbreviate $C_i(\calG_1)$ as $c_i$, and for $S\subseteq[n]$ write $c_S\coloneqq\prod_{i\in S}c_i$. Identify $y\in\{0,1\}^n$ with $Y\coloneqq\{i\,:\,y_i=1\}$, so $|y|=|Y|$. Let the uniform distribution over all size-$s$ subsets of $S$ be denoted by $\binom{S}{s}$, so $y\sim\calF_\zeta$ corresponds to $Y\sim\binom{[n]}{n/2-1+\zeta}$. Let $I_{Y\supseteq B}\coloneqq\prod_{i\not\in Y}C_i(\calG_0)$ be the indicator random variable for the event $Y\supseteq B$. Now for $\zeta\in\{0,1,2\}$,
\begin{align*}
\textstyle C(\calD_\zeta)~&\textstyle=~\E_{y\sim\calF_\zeta}[C(\calG_y)]~=~\E_{y\sim\calF_\zeta}\bigl[\prod_iC_i(\calG_{y_i})\bigr]~=~\E_{Y\sim\binom{[n]}{n/2-1+\zeta}}\bigl[c_Y\cdot I_{Y\supseteq B}\bigr]\\[5pt]
&\textstyle=~\underbrace{\Pr_{Y\sim\binom{[n]}{n/2-1+\zeta}}[Y\supseteq B]}_{\mbox{$p_\zeta$}}\,\cdot~c_B\,\cdot\,\underbrace{\E_{S\sim\binom{A}{n/2-1+\zeta-|B|}}[c_S]}_{\mbox{$q_\zeta$}}.
\end{align*}
If $c_B=0$ then $C(\calD_1)=0$, so assume $c_B>0$. Factoring out $c_B$ and defining $p_\zeta$ and $q_\zeta$ as above (but $q_\zeta$ is undefined if $p_\zeta=0$), our goal is to show that either $p_0q_0>\frac{1}{3}p_1q_1$ or $p_2q_2>\frac{4}{3}p_1q_1$ or $p_1q_1=0$. There are three cases depending on whether $|B|$ is greater than, equal to, or less than $n/2$. First we collect some generally useful properties:

\begin{claim} \label{clm:useful}
~\i~~$p_0=\frac{n/2-|B|}{n/2}\cdot p_1$~~and~~$p_1=\frac{n/2+1-|B|}{n/2+1}\cdot p_2$.~~~~\ii~~$0<q_1\le\sqrt{n}\cdot q_2$~~if $q_1$ is defined.
\end{claim}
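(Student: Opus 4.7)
The plan is to treat (i) and (ii) separately; (i) reduces to a brief binomial calculation, and (ii) has two sub-parts, of which $q_1 \le \sqrt{n}\cdot q_2$ is the crux.

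For (i), I would write $p_\zeta = \binom{n-|B|}{n/2-1+\zeta-|B|}/\binom{n}{n/2-1+\zeta}$ (the probability that a uniformly random $(n/2-1+\zeta)$-subset of $[n]$ contains the fixed set $B$). Each stated equality then follows by applying $\binom{N}{k-1}/\binom{N}{k}=k/(N-k+1)$ twice---once to each of the two binomials---and simplifying.

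For (ii), I would first observe $q_1 > 0$: for $i \in A$ the conjunction $C_i$ has only negative literals, so $c_i = \binom{m-w_i}{m/2}/\binom{m}{m/2}$ vanishes iff the width $w_i > m/2$, and the width budget caps the number of such indices at $2w/m \le n/8$ once $m \ge \log n$, leaving more than $s$ indices with $c_i > 0$ in $A$. For the main inequality I would use the pairing identity
\[
q_2 \;=\; \E_{S \sim \binom{A}{s}}\!\bigl[c_S \cdot \bar c(S)\bigr], \qquad \bar c(S) \;\coloneqq\; \frac{1}{a-s}\sum_{i \in A \setminus S} c_i,
\]
obtained by writing each $T \in \binom{A}{s+1}$ uniquely as $S \cup \{i_0\}$ with $i_0 \in T$ chosen uniformly. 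Then $q_2/q_1$ is a $c_S$-weighted average of $\bar c(S)$, so it suffices to prove $\bar c(S) \ge n^{-1/2}$ uniformly in $S$. Restricting the sum to ``good'' indices $A_1 \coloneqq \{i \in A : w_i \le m/4\}$---where \autoref{fact:gapor}\ii gives $c_i \ge 3^{-w_i}$---and using a width count to show $|A \setminus A_1| \le 4w/m \le n/4$ so that $|A_1 \setminus S| \ge n/4+1$, Jensen's inequality for the convex map $t \mapsto 3^{-t}$ (together with $\sum_{i \in A_1 \setminus S} w_i \le w$) yields
\[
\bar c(S) \;\ge\; \frac{|A_1 \setminus S|}{a-s} \cdot 3^{-w/|A_1 \setminus S|} \;\ge\; \Omega\bigl(n^{-(\log_2 3)/4}\bigr),
\]
which dominates $n^{-1/2}$ since $(\log_2 3)/4 \approx 0.4 < 1/2$.

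The main technical delicacy will be the constants: the width threshold $\tfrac{1}{16}n\log n$ is calibrated so that after Jensen the exponent $(\log_2 3)/4$ stays strictly below $1/2$, leaving a slim margin for the $\sqrt n$ bound. I expect the careful bookkeeping through the lower bound on $|A_1 \setminus S|$ (especially at the boundary $m = \log n$) to be where the bulk of the verification lies. Small-$n$ cases are no obstacle since the Jensen estimate beats $n^{-1/2}$ once $n$ exceeds a fixed constant.
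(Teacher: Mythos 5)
Your overall approach is correct and matches the structure of the paper's proof, but it diverges in the elementary tools used. For \i, you compute $p_\zeta=\binom{n-|B|}{n/2-1+\zeta-|B|}/\binom{n}{n/2-1+\zeta}$ directly and simplify; the paper instead uses a coupling argument (sample $Y_1\sim\binom{[n]}{n/2}$ and remove a uniformly random element to get $Y_0$), reading off $p_0/p_1=\frac{n/2-|B|}{n/2}$ in one line. Both are fine and elementary; the coupling is perhaps cleaner but your algebra arrives at the same identity. For \ii, you and the paper share the same key decomposition---write $q_{\zeta+1}$ as a $c_S$-weighted average of the conditional mean $\E_{i\in A\setminus S}[c_i]$, and bound that conditional mean from below by roughly $n^{-2/5}$---but you bound it via Jensen on the convex map $t\mapsto 3^{-t}$, whereas the paper just applies a Markov-type width count to get $w_i\le\tfrac{1}{4}\log n$ for at least $3n/4$ indices, giving a uniform floor $c_i\ge n^{-2/5}$ for those $i\in A$. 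Your Jensen route gives the same exponent $(\log_2 3)/4\approx 0.396$, so the constants work out either way. One small slip to fix: you write $q_2=\E_{S\sim\binom{A}{s}}[c_S\cdot\bar c(S)]$ coming from a decomposition of sets in $\binom{A}{s+1}$, but $q_2$ is the expectation over $\binom{A}{s+2}$ (so that formula computes $q_1$, not $q_2$); you want $q_2=\E_{S\sim\binom{A}{s+1}}[c_S\cdot\bar c(S)]$ with $|A\setminus S|=n/2$. This off-by-one doesn't affect the bound $\bar c(S)\ge n^{-1/2}$ (the set being averaged over has size $n/2$ rather than $n/2+1$), but the indices should be consistent with which ratio $q_2/q_1$ you're actually bounding.
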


\begin{proof}
\i: We just consider $p_0$ vs.\ $p_1$ since $p_1$ vs.\ $p_2$ is similar. Imagine sampling $Y_1\sim\binom{[n]}{n/2}$ and then obtaining the set $Y_0$ by removing a uniformly random $i\in Y_1$. If $Y_1\supseteq B$, then $Y_0\supseteq B$ when $i\in Y_1\smallsetminus B$, which happens with probability $\frac{n/2-|B|}{n/2}$ (assuming $|B|\le n/2$; if $|B|>n/2$ then $p_0=p_1=0$). Thus \[\textstyle p_0~=~\Pr[Y_0\supseteq B]~=~\Pr[Y_0\supseteq B\mid Y_1\supseteq B]\cdot\Pr[Y_1\supseteq B]~=~\frac{n/2-|B|}{n/2}\cdot p_1.\]

\ii: Let $w_i$ be the width of $C_i$, so $\sum_iw_i=w\le\frac{1}{16}n\log n$. Then $w_i\le\frac{1}{4}\log n\le m/4$ for at least $3n/4$ many values of $i$, and for such $i$ note that by \autoref{fact:gapor}.\ii, $c_i\ge 3^{-(\log n)/4}\ge n^{-2/5}$ if $i\in A$. This implies that if we sample a uniformly random $i$ from any $A'\subseteq A$ with $|A'|=n/2$ (note that $|A|\ge n/2$ if $q_1$ is defined) then $\E_{i\in A'}[c_i]\ge\frac{1}{2}\cdot n^{-2/5}+\frac{1}{2}\cdot 0\ge 1/\sqrt{n}$. Now to relate $q_2$ and $q_1$, \[q_2~=~\E_{S\sim\binom{A}{n/2-|B|}}\bigl[c_S\cdot\E_{i\in A\smallsetminus S}[c_i]\bigr]~\ge~\E_{S\sim\binom{A}{n/2-|B|}}\bigl[c_S/\sqrt{n}\bigr]~=~q_1/\sqrt{n}\] where the inequality uses $|A\smallsetminus S|=(n-|B|)-(n/2-|B|)=n/2$. Furthermore, $q_1>0$ if $q_1$ is defined, because $n/2-|B|\le|A|-n/4$ and thus there exists an $S\subseteq A$ with $|S|=n/2-|B|$ and $c_i\ge n^{-2/5}>0$ for all $i\in S$, hence $c_S>0$. (A similar argument shows $0<q_0\le\sqrt{n}\cdot q_1$ if $q_0$ is defined, but we will not need that.)
\end{proof}

\paragraph{Case $\bm{|B|>n/2}$.} In this case, $p_1=0$ so we are done.

\paragraph{Case $\bm{|B|=n/2}$.} By \autoref{clm:useful}, $p_2=p_1\cdot(n/2+1)$ and $q_2\ge q_1/\sqrt{n}>0$ and thus \[\textstyle p_2q_2~\ge~p_1q_1\cdot(n/2+1)/\sqrt{n}~>~\frac{4}{3}p_1q_1.\]

\paragraph{Case $\bm{|B|<n/2}$.} We will show that $\frac{p_0}{p_1}\ge\frac{1}{2}\cdot\frac{p_1}{p_2}$ and $\frac{q_2}{q_1}\ge\frac{9}{10}\cdot\frac{q_1}{q_0}$, which yields the punchline: \[\textstyle\text{If~\,$p_0q_0\le\frac{1}{3}p_1q_1$~\,then~\,$\frac{q_2}{q_1}~\ge~\frac{9}{10}\!\cdot\!\frac{q_1}{q_0}~\ge~\frac{9}{10}\!\cdot\! 3\!\cdot\!\frac{p_0}{p_1}~\ge~\frac{9}{10}\!\cdot\! 3\!\cdot\!\frac{1}{2}\!\cdot\!\frac{p_1}{p_2}~>~\frac{4}{3}\!\cdot\!\frac{p_1}{p_2}$~\,and thus~\,$p_2q_2>\frac{4}{3}p_1q_1$.}\] First, $\frac{p_0}{p_1}\ge\frac{1}{2}\cdot\frac{p_1}{p_2}$ follows from \autoref{clm:useful}.\i using $|B|\le n/2-1$: \[\textstyle\frac{p_0}{p_1}~=~\frac{n/2+1}{n/2}\cdot\frac{n/2-|B|}{n/2+1-|B|}\cdot\frac{p_1}{p_2}~\ge~1\cdot\frac{n/2-(n/2-1)}{n/2+1-(n/2-1)}\cdot\frac{p_1}{p_2}~=~\frac{1}{2}\cdot\frac{p_1}{p_2}.\] It just remains to show $\frac{q_2}{q_1}\ge\frac{9}{10}\cdot\frac{q_1}{q_0}$. Henceforth let $s\coloneqq n/2-1-|B|\ge 0$. The experiment $S\sim\binom{A}{s+2}$ in the definition of $q_2$ can alternatively be viewed as:
\begin{mylist}{0}
\item Sample $S_0\sim\binom{A}{s}$.
\item Sample $i\in A\smallsetminus S_0$ u.a.r.\ and let $S_1\coloneqq S_0\cup\{i\}$.
\item Sample $j\in A\smallsetminus S_1$ u.a.r.\ and let $S=S_2\coloneqq S_1\cup\{j\}$.
\end{mylist}
That is, $i$ and $j$ are sampled without replacement. We consider an ``ideal'' (easier to analyze) version of this experiment that samples $i$ and $j$ with replacement, in other words, the third step becomes:
\begin{mylist}{0}
\item Sample $j\in A\smallsetminus S_0$ u.a.r.\ and let $S_2^*\coloneqq S_1\cup\{j\}$.
\end{mylist}
Now $S_2^*$ is a \emph{multiset}, which may have two copies of $i$, in which case the product $c_{S_2^*}$ has two factors of $c_i$. Just as $q_2\coloneqq\E[c_{S_2}]$, we let $q_2^*\coloneqq\E[c_{S_2^*}]$, and we next show how to derive $\frac{q_2^*}{q_1}\ge\frac{q_1}{q_0}$ from the following claim:

\begin{claim} \label{clm:growth}
For all nonnegative numbers $\alpha_1,\ldots,\alpha_N$ and $\beta_1,\ldots,\beta_N$ such that $\alpha_k\beta_k>0$ for some $k$, \[\frac{\sum_k\alpha_k\beta_k^2}{\sum_k\alpha_k\beta_k}~\ge~\frac{\sum_k\alpha_k\beta_k}{\sum_k\alpha_k}.\]
\end{claim}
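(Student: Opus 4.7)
The plan is to recognize this inequality as a disguised form of the Cauchy--Schwarz inequality (equivalently, the fact that variance is nonnegative). Specifically, the hypothesis ``$\alpha_k\beta_k>0$ for some $k$'' guarantees in particular that $Z\coloneqq\sum_k\alpha_k>0$, so I would introduce the probability distribution $p$ on $\{1,\ldots,N\}$ defined by $p(k)\coloneqq\alpha_k/Z$, and view $\beta$ as a random variable with respect to $p$. Under this normalization, $\sum_k\alpha_k=Z$, $\sum_k\alpha_k\beta_k=Z\cdot\E_p[\beta]$, and $\sum_k\alpha_k\beta_k^2=Z\cdot\E_p[\beta^2]$, so the claimed inequality
\[\frac{\sum_k\alpha_k\beta_k^2}{\sum_k\alpha_k\beta_k}~\ge~\frac{\sum_k\alpha_k\beta_k}{\sum_k\alpha_k}\]
becomes simply $\E_p[\beta^2]/\E_p[\beta]\ge\E_p[\beta]$, i.e.\ $\E_p[\beta^2]\ge(\E_p[\beta])^2$, which is the standard variance inequality $\mathrm{Var}_p[\beta]\ge 0$.

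Before dividing, I need to check that $\E_p[\beta]>0$ so both denominators are nonzero. This is exactly the role of the side condition $\alpha_k\beta_k>0$ for some $k$: that index $k$ contributes positively to $\sum_k\alpha_k\beta_k$ and all other terms are nonnegative, so $\sum_k\alpha_k\beta_k>0$ (and a fortiori $\sum_k\alpha_k>0$). Thus the division is legitimate and the inequality follows from $\mathrm{Var}_p[\beta]\ge 0$.

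If I wanted to avoid any probabilistic language, the same argument is just Cauchy--Schwarz applied to the vectors $(\sqrt{\alpha_k})_k$ and $(\sqrt{\alpha_k}\,\beta_k)_k$: $\bigl(\sum_k\alpha_k\beta_k\bigr)^2\le\bigl(\sum_k\alpha_k\bigr)\bigl(\sum_k\alpha_k\beta_k^2\bigr)$, which rearranges (after dividing by the two positive quantities $\sum_k\alpha_k$ and $\sum_k\alpha_k\beta_k$) to the desired bound. Since the proof is a one-line invocation of a classical inequality, there is no real obstacle; the only thing to be careful about is handling the positivity conditions correctly, and those are immediate from the hypothesis.
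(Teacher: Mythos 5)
Your proof is correct. After clearing denominators, both you and the paper arrive at the same inequality $\bigl(\sum_k\alpha_k\bigr)\bigl(\sum_k\alpha_k\beta_k^2\bigr)\ge\bigl(\sum_k\alpha_k\beta_k\bigr)^2$, and you correctly note that the hypothesis $\alpha_k\beta_k>0$ for some $k$ forces both $\sum_k\alpha_k>0$ and $\sum_k\alpha_k\beta_k>0$, so the divisions are legitimate. The only difference is that you invoke Cauchy--Schwarz (equivalently, nonnegativity of variance under the normalized weights $\alpha_k/\sum_\ell\alpha_\ell$) as a black box, whereas the paper unrolls that instance of Cauchy--Schwarz by hand: it expands both sides into double sums over $(k,\ell)$, cancels the diagonal, pairs up the $k<\ell$ terms, and reduces each summand to $(\beta_\ell-\beta_k)^2\ge 0$. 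So you gain brevity by citing a classical inequality, while the paper stays self-contained at the cost of a few extra lines; the underlying argument is the same.
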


\begin{proof}
By clearing denominators, this inequality is equivalent to \[\textstyle\bigl(\sum_k\alpha_k\bigr)\bigl(\sum_k\alpha_k\beta_k^2\bigr)~\ge~\bigl(\sum_k\alpha_k\beta_k\bigr)^2\] which can be rewritten as \[\textstyle\sum_{k,\ell}\alpha_k\alpha_\ell\beta_\ell^2~\ge~\sum_{k,\ell}\alpha_k\beta_k\alpha_\ell\beta_\ell.\] Subtracting $\sum_k\alpha_k^2\beta_k^2$ from both sides, this is equivalent to \[\textstyle\sum_{k<\ell}\bigl(\alpha_k\alpha_\ell\beta_\ell^2+\alpha_\ell\alpha_k\beta_k^2\bigr)~\ge~\sum_{k<\ell}2\alpha_k\beta_k\alpha_\ell\beta_\ell.\] We show that this inequality holds for each summand separately. Factoring out $\alpha_k\alpha_\ell$, this reduces to showing $\beta_\ell^2+\beta_k^2\ge 2\beta_k\beta_\ell$, which holds since \[\textstyle\beta_\ell^2+\beta_k^2-2\beta_k\beta_\ell~=~(\beta_\ell-\beta_k)^2~\ge~0.\]
\end{proof}

In the statement of \autoref{clm:growth}, let the index $k$ correspond to $S_0$, let $N\coloneqq\binom{|A|}{s}$, let $\alpha_k\coloneqq c_{S_0}/N$, and let $\beta_k\coloneqq\E_{i\in A\smallsetminus S_0}[c_i]$. Then \[\textstyle q_0=\sum_k\alpha_k\quad\text{and}\quad q_1=\sum_k\alpha_k\beta_k\quad\text{and}\quad q_2^*=\sum_k\alpha_k\beta_k^2\] and $q_0\ge q_1>0$ by \autoref{clm:useful}.\ii (i.e., $\alpha_k\beta_k>0$ for some $k$) so by \autoref{clm:growth} we indeed have $\frac{q_2^*}{q_1}\ge\frac{q_1}{q_0}$. To conclude that $\frac{q_2}{q_1}\ge\frac{9}{10}\cdot\frac{q_1}{q_0}$, we just need to show $q_2\ge\frac{9}{10}q_2^*$.

The third step of the $S_2$ experiment is just the third step of the $S_2^*$ experiment conditioned on $j\ne i$, which happens with probability $1-\frac{1}{|A|-s}$. With probability $\frac{1}{|A|-s}$, we get $j=i$ in the $S_2^*$ experiment. If we condition on the latter event, it yields another experiment, whose result we call $S_2^\text{err}$, which is a multiset definitely containing two copies of $i$. Correspondingly we define $q_2^\text{err}\coloneqq\E[c_{S_2^\text{err}}]$ (with two factors of $c_i$). Now we have \[\textstyle q_2^*~=~\Pr[j\ne i]\cdot\E\bigl[c_{S_2^*}\bigmid j\ne i\bigr]+\Pr[j=i]\cdot\E\bigl[c_{S_2^*}\bigmid j=i\bigr]~=~\bigl(1-\frac{1}{|A|-s}\bigr)\cdot q_2+\frac{1}{|A|-s}\cdot q_2^\text{err}~\le~q_2+\frac{2}{n}\cdot q_2^\text{err}\] since $|A|-s=(n-|B|)-(n/2-1-|B|)=n/2+1\ge n/2$.

The $S_2^\text{err}$ experiment can alternatively be viewed as:
\begin{mylist}{0}
\item Sample $S_1\sim\binom{A}{s+1}$.
\item Sample $i\in S_1$ u.a.r.\ and let $S_2^\text{err}\coloneqq S_1\cup\{i\}$.
\end{mylist}
This implies that $q_2^\text{err}\le q_1$ because the extra factor of $c_i\le 1$ cannot increase the expectation. By \autoref{clm:useful}.\ii we get $q_2^\text{err}\le q_1\le\sqrt{n}\cdot q_2$. Combining, we have \[\textstyle q_2^*~\le~q_2+\frac{2}{n}\cdot\sqrt{n}\cdot q_2~=~\bigl(1+\frac{2}{\sqrt{n}}\bigr)q_2~\le~\frac{10}{9}q_2\] and thus $q_2\ge\frac{9}{10}q_2^*$ as desired. This concludes the proof of \hyperref[thm:maj:restated]{Theorem \ref*{thm:maj}}.


\section{Open questions} \label{sec:open}

\begin{open} \label{open:total-amp}
Is there a total function $g\colon\{0,1\}^m\to\{0,1\}$ such that $\BPP(\Xor\circ g^n)\ge\Omega(n\log n\cdot\BPP(g))$ or~\,$\BPP(\Maj\circ g^n)\ge\Omega(n\log n\cdot\BPP(g))$?
\end{open}

Since \autoref{fact:gapor} captures the only properties of $g=\GapOr$ used in our proof of \autoref{thm:maj}, this provides a possible roadmap for confirming \autoref{open:total-amp}: Just find a total function $g$ satisfying properties similar to \autoref{fact:gapor}, enabling our proof of \autoref{thm:maj} to go through. However, such a $g$ would need to have certificate complexity $\omega(\BPP(g))$, and it remains a significant open problem to find any such total function $g$ (the ``pointer function'' \cite{goos18deterministic,ambainis17separations} and ``cheat sheet'' \cite{aaronson16separations} methods do not seem to work).

Another approach for confirming \autoref{open:total-amp} would be to generalize the strong direct sum theorem from \cite{blais19optimal} to show that $\BPP(\Xor\circ g^n)\ge\Omega(n\cdot\BPPbar_{1/n}(g))$ or $\BPP(\Maj\circ g^n)\ge\Omega(n\cdot\BPPbar_{1/n}(g))$ holds for all $g$. This would answer \autoref{open:total-amp} in the affirmative, since \cite{blais19optimal} designed a total function $g$ satisfying $\BPPbar_{1/n}(g)\ge\Omega(\RP(g)\cdot\log n)$ using the ``pointer function'' method. Compared to our approach from the previous paragraph, this approach involves less stringent requirements on $g$, which makes it easier to design $g$ but harder to prove the composition lower bound.

\begin{open} \label{open:total-side}
Is there a total function $f\colon\{0,1\}^n\to\{0,1\}$ such that $\BPP^*(f)\ge\omega(\BPP^\dagger(f))$ (or similarly, $\BPP(f\circ\GapMaj^n)\ge\omega(\BPP(f\circ\GapOr^n))$)?
\end{open}

It is not difficult to find such a \emph{partial} function $f$. Namely, take any function $f'\colon\{0,1\}^n\to\{0,1\}$ such that $\BPP^*(f')\ge\Omega(n\log n)$, such as $f'=\Xor$ or $f'=\Maj$. Then take $f=f'\circ\Which^n$, which has input length $2n$ (recall from \autoref{sec:intro:formulate} that given $y\in\{0,1\}^2$ with the promise that $y$ has Hamming weight $1$, $\Which(y)$ indicates the location of the unique $1$ in $y$). A simple reduction shows $\BPP^*(f)\ge\BPP^*(f')$. However, $\BPP^\dagger(f)\le O(n)$: For each block of $2$ bits, we can repeatedly query both until one of them returns $1$ (which takes $O(1)$ queries in expectation). After doing this for all $n$ blocks (which takes $O(n)$ queries in expectation), we know for sure what the entire actual input is. By Markov's inequality, we can abort the execution after $O(n)$ queries while introducing only a small constant error probability. (Intuitively, composition with $\Which$ preserves hardness for $2$-sided noise but converts $1$-sided noise to ``$0$-sided noise'', and no partial function needs $\omega(n)$ queries in the setting of $0$-sided noise.)
\bigskip

In communication (rather than query) complexity, somewhat analogous questions have been studied in specific contexts \cite{molinaro13beating,blais14information,saglam18near}. The proof of \autoref{thm:or} also works for communication complexity. It would be interesting to develop analogues of \autoref{thm:xor} and \autoref{thm:maj} for communication complexity.


\appendix
\section{Proof of \autoref{thm:or}: \texorpdfstring{$\Or$}{Or} never necessitates amplification} \label{sec:or}

For completeness, we provide a self-contained proof that $\BPP^*(\Or)\le O(n)$, using the following standard fact about random walks (``the drunkard at the cliff'').

\begin{lemma} \label{lem:random-walk}
Consider a random walk on the integers that begins at $0$ and in each step moves right $(+1)$ with probability $p$ and moves left $(-1)$ with probability $1-p$.
\begin{mylist}{2}
\item[\i] If $p<1/2$ then the expected time at which the walk first visits $-1$ is $1/(1-2p)$.
\item[\ii] If $p>1/2$ then the probability that the walk ever visits $-1$ is $(1-p)/p$.
\end{mylist}
\end{lemma}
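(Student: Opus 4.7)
The plan is to handle both parts via first-step analysis combined with the strong Markov property and translation invariance of the walk.

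For part (i), let $\mu$ denote the expected first hitting time of $-1$ starting from $0$. I would first argue briefly that $\mu<\infty$: since $p<1/2$ the step distribution has mean $2p-1<0$, so the position at time $t$ is a sum of i.i.d.\ steps drifting to $-\infty$, which gives an exponential tail on the hitting time of $-1$. Then I would condition on the first step. With probability $1-p$ the walk lands at $-1$ immediately, contributing $(1-p)\cdot 1$ to $\mu$. With probability $p$ the walk moves to $+1$, and to reach $-1$ it must pass through $0$; by translation invariance and the strong Markov property, the expected time from $+1$ to $0$ equals $\mu$, and then the expected time from $0$ to $-1$ equals $\mu$ again, contributing $p\cdot(1+2\mu)$. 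Setting $\mu=(1-p)+p(1+2\mu)=1+2p\mu$ and solving yields $\mu=1/(1-2p)$.

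For part (ii), let $q$ be the probability that the walk ever reaches $-1$ starting from $0$. First I would establish $q<1$: by the strong law of large numbers the walk tends to $+\infty$ almost surely when $p>1/2$, so with positive probability $-1$ is never visited. Then the same first-step analysis gives $q=(1-p)+p\cdot q^2$, since conditional on the first step going to $+1$, reaching $-1$ requires first reaching $0$ (probability $q$ by translation invariance) and then $-1$ from $0$ (probability $q$ again, independently by the strong Markov property). The quadratic $pq^2-q+(1-p)=0$ factors as $\bigl(pq-(1-p)\bigr)(q-1)=0$, and since the root $q=1$ is ruled out, we obtain $q=(1-p)/p$.

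The main non-routine step in each part is the a priori bound (finiteness of $\mu$, and $q<1$): without it the fixed-point equations admit extraneous solutions and the derivation would not pin down the desired value. Both bounds follow from basic drift/large-deviation facts about biased random walks, so I expect this to be the only place where any care is needed; everything else reduces to algebra on a single-variable equation.
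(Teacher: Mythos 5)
Your proof is correct and uses the same core device as the paper: first-step analysis yields $\mu=1+2p\mu$ in part~(i) and $q=(1-p)+pq^2$ in part~(ii), and the work is in discarding extraneous solutions. Where you diverge is exactly there. In part~(ii) the paper avoids invoking the strong law of large numbers: it defines $q_k$ as the probability of hitting $-1$ within the first $k$ steps and proves $q_k\le(1-p)/p$ by induction on $k$ (using the same first-step decomposition, now for the truncated event), then lets $k\to\infty$. This keeps the argument entirely elementary and self-contained, whereas your appeal to $S_n\to+\infty$ a.s.\ needs one more small step (e.g., if $q=1$ then by translation invariance and the strong Markov property the walk would visit $-k$ for every $k$, hence be unbounded below, contradicting $S_n\to+\infty$). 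In part~(i) the paper is actually more terse than you: it simply asserts the linear equation has a unique solution and does not separately establish $\E[X]<\infty$, so your drift-based finiteness argument is a genuine improvement in rigor there. Both routes are sound; the paper's buys elementarity, yours buys a cleaner conceptual reason for why the spurious roots are impossible.
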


\begin{proof}[Proof of \autoref{lem:random-walk}]
\i: If random variable $X$ represents the time at which the walk first visits $-1$, then its expectation satisfies $\E[X]=1+p\cdot2\E[X]$ since after the first step, it either is already at $-1$, or is at $+1$ in which case to reach $-1$ it must first get back to $0$ ($\E[X]$ expected time) then from there get to $-1$ (another $\E[X]$ expected time). This equation has a unique solution $\E[X]=1/(1-2p)<\infty$.

\ii: If event $E$ represents the walk ever visiting $-1$, then its probability satisfies $\Pr[E]=(1-p)\cdot 1+p\cdot\Pr[E]^2$ since after the first step, it either is already at $-1$, or is at $+1$ in which case to reach $-1$ it must first get back to $0$ (probability $\Pr[E]$) then from there get to $-1$ (again probability $\Pr[E]$). This equation has two solutions $\Pr[E]\in\{(1-p)/p,1\}$. To rule out $\Pr[E]=1$, we define $q_k$ as the probability that the walk visits $-1$ within the first $k$ steps, and we show by induction on $k$ that $q_k\le(1-p)/p$. The base case is trivial since $q_0=0$. Assuming $q_k\le(1-p)/p$ we show $q_{k+1}\le(1-p)/p$. After the first step, with probability $1-p$ it is already at $-1$, and with probability $p$ it is at $+1$. In the latter case, to get to $-1$ within a total of $k+1$ steps (including the first step), it must get from $+1$ to $0$ and then from there it must get to $-1$, all within $k$ more steps; in particular, the walk must get from $+1$ to $0$ within $k$ steps (probability $\le q_k$) and then from $0$ to $-1$ within $k$ steps (probability $\le q_k$). Overall we can bound $q_{k+1}\le(1-p)\cdot 1+p\cdot q_k^2\le(1-p)+p\cdot(1-p)^2/p^2=(1-p)/p$.
\end{proof}

\begin{proof}[Proof of \autoref{thm:or}]
We may assume the noise probabilities are $\le 1/4$ (rather than just $\le 1/3$), because whenever an input bit is queried, we can instead query it five times and pretend that the majority vote was the result of the single query. This would only affect the cost by a constant factor. With this assumption, here is our decision tree, on input $y\in\{0,1\}^n$:
\bigskip

For $i=1,2,\ldots,n$:

\hspace*{12pt}Repeat:

\hspace*{24pt}Query $y_i$.

\hspace*{24pt}If the queries to $y_i$ have resulted in more $0$s than $1$s so far,

\hspace*{36pt}then break out of the inner loop.

\hspace*{24pt}If a total of $6n$ queries have been made (across all input bits), then halt and output $1$.

Halt and output $0$.
\bigskip

\noindent This decision tree's cost is $\le 6n$. To see the correctness, consider any input $y\in\{0,1\}^n$ and any tuple of noise probabilities ($\nu_1,\ldots,\nu_n)$ where each $\nu_i\le 1/4$. For each $i$, the random variable \[\text{``number of $1$s minus number of $0$s, among the queries to $y_i$ so far''}\] is a random walk with move-right probability $p_i=\nu_i\le 1/4$ if $y_i=0$ and $p_i=1-\nu_i\ge 3/4$ if $y_i=1$, and which stops when it visits $-1$.

First assume $\Or(y)=0$. Then for each $i$, $y_i=0$ and so by \autoref{lem:random-walk}.\i, the expected number of queries until the inner loop is broken is $1/(1-2p_i)\le 2$. By linearity, the expected total number of queries until all $n$ inner loops have been broken is $\le 2n$, so by Markov's inequality this number of queries is $<6n$ with probability $\ge 2/3$. Thus the decision tree outputs $0$ with probability $\ge 2/3$.

Now assume $\Or(y)=1$. Then for some $i$, $y_i=1$ and so by \autoref{lem:random-walk}.\ii, with probability $1-(1-p_i)/p_i=2-1/p_i\ge 2/3$ there would never be more $0$s than $1$s from the queries to $y_i$. In that case, the decision tree would never break out of the $i^\text{th}$ inner loop, even if it were allowed to run forever. Thus the decision tree outputs $1$ with probability $\ge 2/3$.
\end{proof}


\medskip
\subsection*{Acknowledgments}

We thank Badih Ghazi for interesting discussions about this work, and we thank anonymous reviewers for their comments. T.\ Watson was supported by NSF grant CCF-1657377.


\DeclareUrlCommand{\Doi}{\urlstyle{sf}}
\renewcommand{\path}[1]{\small\Doi{#1}}
\renewcommand{\url}[1]{\href{#1}{\small\Doi{#1}}}
\bibliographystyle{alphaurl}
\bibliography{log}

\end{document}